\providecommand{\U}[1]{\protect\rule{.1in}{.1in}}
\newtheorem{theorem}{Theorem}
\newtheorem{lemma}[theorem]{Lemma}
\newtheorem{proposition}[theorem]{Proposition}
\newenvironment{proof}[1][Proof]{\noindent\textbf{#1.} }{\ \rule{0.5em}{0.5em}}
\begin{document}

\title{\textbf{Bianchi spaces and their $3$-dimensional isometries as $S$-expansions
of $2$-dimensional isometries}}
\author{Ricardo Caroca$^{1}$, Igor Kondrashuk$^{2}$, Nelson Merino$^{3,4}$, Felip
Nadal$^{5,4}$\\$^{1}${\small Departamento de Matem\'{a}tica y F\'{\i}sica Aplicadas,
Universidad Cat\'{o}lica de la Sant\'{\i}sima }\\{\small Concepci\'{o}n, Alonso de Rivera 2850, Concepci\'{o}n, Chile}\\$^{2}${\small Departamento de Ciencias B\'{a}sicas, Univerdidad del
B\'{\i}o-B\'{\i}o, }\\{\small Campus Fernando May, Casilla 447, Chill\'{a}n, Chile}\\$^{3}$ {\small Departamento de F\'{\i}sica, Universidad de Concepci\'{o}n,
Casilla 160-C, Concepci\'{o}n, Chile}\\$^{4}${\small Dipartimento di Fisica, Politecnico di Torino, }\\{\small Corso Duca degli Abruzzi, 24, I-10129 Torino, Italia}\\$^{5}${\small IFIC (Instituto de F\'{\i}sica Corpuscular) }\\{\small Edificio Institutos de Investigaci\'{o}n. c/ Catedr\'{a}tico Jos\'{e}
Beltr\'{a}n, 2. E-46980 Paterna, Spain}}
\maketitle

\begin{abstract}
In this paper  we show that some $3$-dimensional isometry
algebras, specifically those of type I, II, III and V ( according
Bianchi's classification), can be obtained as expansions of the isometries in
$2$ dimensions. It is shown that in general more than one semigroup will lead
to the same  result. It is impossible to obtain the algebras of
type IV, VI-IX as an expansion from the isometry algebras in $2$ dimensions.
This means that the first set of algebras has properties that can be obtained
from isometries in $2$ dimensions while the second set  has properties that
are in some sense intrinsic in $3$ dimensions. All the results are checked
with computer programs. This procedure can be
generalized to higher dimensions, which could be useful for diverse
physical applications.

\end{abstract}

\newpage

\section{\textbf{Introduction}}

Expansions of Lie algebras are a generalization of the contraction method
 which were introduced some years ago in refs. \cite{hs},
\cite{aipv1}, \cite{aipv2}, \cite{aipv3}, \cite{irs}, \cite{irs1},
\cite{irs2}. These methods, in particular the $S$-expansion procedure
developed in ref. \cite{irs}, are powerful tools to find \textit{non-trivial}%
\footnote{By \textit{non-trivial relations} we mean that these mechanisms of
contractions and expansions allow us to obtain some Lie algebras starting with
other algebras that have completely different properties. Also, the original
algebra is not necessarily (could be in specific cases) contained as a
subalgebra of the algebra obtained by these processes.} relations between different Lie
algebras which are a very interesting problem from both, physical and
mathematical points of view. In fact, many physical applications
have been found in this context (see, for example, refs. \cite{aipv1},
\cite{irs1}, \cite{irs2}, \cite{K15}, \cite{K11}, \cite{K12} and \cite{K14}).

It is the aim of this paper to show that the $S$-expansion method permits us to
obtain some types of $3$-dimensional isometries from the
$2$-dimensional isometries. We present a complete study about the possibility of
finding non-trivial relations between  $2$- and $3$-dimensional
isometry Lie algebras. Even when these isometries are well known in the
literature (see \cite{bian}), the non-trivial relations
we find between $2$- and $3$-dimensional isometry algebras are new
and interesting results. In fact, we identify the $3$-dimensional algebras
that can be obtained from the $2$-dimensional algebras by means of an
$S$-expansion and describe explicitly how to do it. For  other
$3$-dimensional algebras we show that it is impossible to obtain them
from the information of the $2$-dimensional algebras, and in some sense the
information that they contain is intrinsic to $3$ dimensions. A
possible generalization of this procedure to higher dimensions can be useful
in some physical applications (see section \ref{Com}).

This paper is organized as follows: In section \ref{desc} we
present a brief technical description of the basic ingredients that we are
going to use along this work. In section \ref{Sexp} we will review
some aspects of the $S$-expansion procedure while in section
\ref{history} we summarize the history about the enumeration and
characterization of finite semigroups existing for each order $n=1,...,9$. 
In section \ref{resred} we introduce a general kind of expansions that 
will be made in next sections and  in section \ref{clasification} we briefly describe 
the Bianchi classification of isometries in $2$- and $3$-dimensional spaces. 
In section \ref{Bianchi_related} it is shown in an instructive way how some types of
isometries are related with $2$-dimensional isometries using known semigroups
and also by introducing other semigroups that have not been used earlier in
the applications of the $S$-expansion procedure. In section \ref{Bsum} we
briefly summarize the results obtained by this iterative procedure. In section
\ref{Bianchi_no_related} it is shown why is it not possible to obtain, by
expansions, the other 3-dimensional isometries from the 2-dimensional algebras.
Finally in section \ref{check} we check the results using
computer programs and solve the problem entirely.

\section{Technical description}

\label{desc}

\subsection{\textbf{The S-expansion procedure}}

\label{Sexp}

In this section we briefly describe the general abelian semigroup expansion
procedure ($S$-expansion for short). We refer  to
ref.~\cite{irs} for further details.

Consider a Lie algebra $\mathcal{G}$ and a finite abelian semigroup
$S=\left\{  \lambda_{\alpha}\right\}  $. According to Theorem~3.1 from
ref.~\cite{irs}, the Cartesian product
\begin{equation}
\mathcal{G}_{S}\text{\ }\mathcal{=}\text{\ }S\times\mathcal{G}  \text{,}%
\label{s1}%
\end{equation}
is also a Lie algebra. The elements of this expanded algebra are denoted by%
\begin{equation}
X_{\left(  i,\alpha\right)  }=X_{i}\lambda_{\alpha}\label{z2}%
\end{equation}
where the product is understood as a direct product of matrix
representations of the generators $X_{i}$ of $\mathcal{G}$\ and the elements
$\lambda_{\alpha}$ of the semigroup $S$. The Lie product in $\mathcal{G}_{S}$ is
defined as%
\begin{equation}
\left[  X_{\left(  i,\alpha\right)  },X_{\left(  j,\beta\right)  }\right]
=\lambda_{\alpha}\lambda_{\beta}\left[  X_{i},X_{j}\right]  \label{s2'}%
\end{equation}
The set (\ref{s1}) with the composition law (\ref{s2'}) is called a $S$-expanded Lie algebra. \ 

In a nutshell, the $S$-expansion method can be seen as the natural
generalization of the In\"{o}n\"{u}-Wigner contraction, where instead of
multiplying the generators by a numerical parameter, we multiply the
generators by the elements of an Abelian semigroup, for more detail see
\cite{irs}.

\subsection{\textbf{Resonant subalgebra and }$0_{S}$-\textbf{reduced algebra}}

\label{resred}

As shown in \cite{irs} smaller algebras can be extracted from the above
expanded algebra: the resonant subalgebra and the reduced algebra.  Their
existence depends on certain conditions expressed by the Eqs. (23)
and (34) of ref. \cite{irs}.

In fact the original algebra will be one of the $2$-dimensional isometry
algebras (\ref{2dimalg}-\ref{2dimalg_dos}) both having the subspace structure $\mathcal{G}%
=V_{0}\oplus V_{1}$ given by%
\begin{align}
\left[  V_{0},V_{0}\right]   &  \subset V_{0}\ \\
\left[  V_{0},V_{1}\right]   &  \subset V_{1}\label{subs_structure}\\
\left[  V_{1},V_{1}\right]   &  \subset V_{0}\nonumber
\end{align}
where $V_{0}$, $V_{1}$ are respectively generated by $X_{2}$ and $X_{1}$. On
the other hand, the semigroups we are going to construct will
possess a resonant decomposition, i.e, must  be of the form $S=S_{0}\cup
S_{1}$ where%

\begin{align}
S_{0}\times S_{0} &  \subset S_{0}\nonumber\\
S_{0}\times S_{1} &  \subset S_{1}\nonumber\\
S_{1}\times S_{1} &  \subset S_{0}\label{res_con}%
\end{align}
Then, according to Theorem 4.2 of ref. \cite{irs}, the resonant subalgebra is of of the form%
\begin{equation}
\mathcal{G}_{S,R}=\left(  S_{0}\times V_{0}\right)  \oplus\left(  S_{1}\times
V_{1}\right)  \label{re_algebra}%
\end{equation}
Note that Eq. (\ref{res_con}) is a particular case of  Eq. (34) of Ref. (\cite{irs}).

An even smaller algebra can be obtained when there is a zero element in the
semigroup, i.e., an element $0_{S}\in S$ such that, for all $\lambda_{\alpha
}\in S$, $0_{S}\lambda_{\alpha}=0_{S}$. When this is the case, the whole
$0_{S}\times \mathcal{G}$ sector can be removed from the resonant subalgebra by imposing
$0_{S}\times \mathcal{G}=0$ (see Definition 3.3 from ref.~\cite{irs}). The resulting
algebra\ continues to be a Lie algebra and here it will be denoted by
$\mathcal{G}_{S,R}^{\text{red}}$.

\subsection{\textbf{Finite semigroups programming}}

\label{history}

The numbers of finite non-isomorphic semigroups of order $n$ are given in the
following table:%

\begin{equation}%
\begin{tabular}
[c]{|l|l|l}\cline{1-2}%
order & $Q=\ $\# semigroups & \\\cline{1-2}%
1 & 1 & \\\cline{1-2}%
2 & 4 & \\\cline{1-2}%
3 & 18 & \\\cline{1-2}%
4 & 126 & [Forsythe '54]\\\cline{1-2}%
5 & 1,160 & [Motzkin, Selfridge '55]\\\cline{1-2}%
6 & 15,973 & [Plemmons '66]\\\cline{1-2}%
7 & 836,021 & [Jurgensen, Wick '76]\\\cline{1-2}%
8 & 1,843,120,128 & [Satoh, Yama, Tokizawa '94]\\\cline{1-2}%
9 & \textbf{52,989,400,714,478} & [Distler, Kelsey, Mitchell '09]\\\cline{1-2}%
\end{tabular}
\label{hist}
\end{equation}
All the  semigroups of order 4 have been classified by Forsythe in Ref.
\cite{n4}, of order 5 by Motzkin and Selfridge in Ref. \cite{n5}, of order 6
by Plemmons in Ref. \cite{n6-1,n6-2,n6-3}, of order 7 by J\"{u}rgensen and
Wick in Ref. \cite{n7}, and of order 8 by Satoh, Yama and Tokizawa in Ref.
\cite{n8}, and monoids and semigroups of order 9 by Distler and Kelsey in Ref.
\cite{n9-1,n9-2} and by Distler and Mitchell in Ref. \cite{n9-3}. Also, for
semigroups of order 9 the result can be found in Ref. \cite{n9-4}.

As shown in the table the problem of enumerating the all non-isomorphic finite
semigroups of a certain order is a non-trivial problem. In fact, the number
$Q$ of semigroups increases very quickly with the order of the semigroup.

In ref. \cite{Plemmons} a set of algorithms is given that permit us to make
certain calculations with finite semigroups. The first program, \textit{gen.f}%
, gives all the non-isomorphic semigroups of order $n$ for $n=1,2,...,8$%
\footnote{The order $n=9$  is non trivial and the algorithms of
the mentioned reference fails. This non-trivial
problem was solved in 2009 by Andreas Distler, Tom Kelsey \& James Mitchell.
However, in this paper we are going to consider calculations with semigroups of at
most the order $4$.}. The input is the order, $n$, of the semigroups we
want to obtain and the \textit{output} is a list of all the non-isomorphic
semigroups that exist in this order. In this work, the elements of the
semigroup are labeled by $\lambda_{\alpha}$ with $\alpha=1,...,n$ and each
semigroup will be denoted by $S_{\left(  n\right)  }^{a}$ where the
super-index $a=1,...,Q$ identifies the specific semigroup of order $n$. The
second program in \cite{Plemmons}, \textit{com.f}, takes as \textit{input} one
of the mentioned list for a certain order, picks up just the symmetric tables
and generates another list with all the abelian semigroups. For example for
$n=3$ the elements of the semigroup are labeled by $\lambda_{1}$, $\lambda
_{2}$ and $\lambda_{3}$ and the program \textit{com.f} gives the following
list of semigroups:%

\[%
\begin{tabular}
[c]{l|lll}%
$S_{\left(  3\right)  }^{1}$ & $\lambda_{1}$ & $\lambda_{2}$ & $\lambda_{3}%
$\\\hline
$\lambda_{1}$ & $\lambda_{1}$ & $\lambda_{1}$ & $\lambda_{1}$\\
$\lambda_{2}$ & $\lambda_{1}$ & $\lambda_{1}$ & $\lambda_{1}$\\
$\lambda_{3}$ & $\lambda_{1}$ & $\lambda_{1}$ & $\lambda_{1}$%
\end{tabular}
\ \ \text{, }%
\begin{tabular}
[c]{l|lll}%
$S_{\left(  3\right)  }^{2}$ & $\lambda_{1}$ & $\lambda_{2}$ & $\lambda_{3}%
$\\\hline
$\lambda_{1}$ & $\lambda_{1}$ & $\lambda_{1}$ & $\lambda_{1}$\\
$\lambda_{2}$ & $\lambda_{1}$ & $\lambda_{1}$ & $\lambda_{1}$\\
$\lambda_{3}$ & $\lambda_{1}$ & $\lambda_{1}$ & $\lambda_{2}$%
\end{tabular}
\ \ \text{, }%
\begin{tabular}
[c]{l|lll}%
$S_{\left(  3\right)  }^{3}$ & $\lambda_{1}$ & $\lambda_{2}$ & $\lambda_{3}%
$\\\hline
$\lambda_{1}$ & $\lambda_{1}$ & $\lambda_{1}$ & $\lambda_{1}$\\
$\lambda_{2}$ & $\lambda_{1}$ & $\lambda_{1}$ & $\lambda_{1}$\\
$\lambda_{3}$ & $\lambda_{1}$ & $\lambda_{1}$ & $\lambda_{3}$%
\end{tabular}
\ \ \text{, }%
\begin{tabular}
[c]{l|lll}%
$S_{\left(  3\right)  }^{6}$ & $\lambda_{1}$ & $\lambda_{2}$ & $\lambda_{3}%
$\\\hline
$\lambda_{1}$ & $\lambda_{1}$ & $\lambda_{1}$ & $\lambda_{1}$\\
$\lambda_{2}$ & $\lambda_{1}$ & $\lambda_{1}$ & $\lambda_{2}$\\
$\lambda_{3}$ & $\lambda_{1}$ & $\lambda_{2}$ & $\lambda_{3}$%
\end{tabular}
\ \ \text{, }%
\]

\[%
\begin{tabular}
[c]{l|lll}%
$S_{\left(  3\right)  }^{7}$ & $\lambda_{1}$ & $\lambda_{2}$ & $\lambda_{3}%
$\\\hline
$\lambda_{1}$ & $\lambda_{1}$ & $\lambda_{1}$ & $\lambda_{1}$\\
$\lambda_{2}$ & $\lambda_{1}$ & $\lambda_{2}$ & $\lambda_{1}$\\
$\lambda_{3}$ & $\lambda_{1}$ & $\lambda_{1}$ & $\lambda_{3}$%
\end{tabular}
\ \text{, }%
\begin{tabular}
[c]{l|lll}%
$S_{\left(  3\right)  }^{9}$ & $\lambda_{1}$ & $\lambda_{2}$ & $\lambda_{3}%
$\\\hline
$\lambda_{1}$ & $\lambda_{1}$ & $\lambda_{1}$ & $\lambda_{1}$\\
$\lambda_{2}$ & $\lambda_{1}$ & $\lambda_{2}$ & $\lambda_{2}$\\
$\lambda_{3}$ & $\lambda_{1}$ & $\lambda_{2}$ & $\lambda_{2}$%
\end{tabular}
\ \text{, }%
\begin{tabular}
[c]{l|lll}%
$S_{\left(  3\right)  }^{10}$ & $\lambda_{1}$ & $\lambda_{2}$ & $\lambda_{3}%
$\\\hline
$\lambda_{1}$ & $\lambda_{1}$ & $\lambda_{1}$ & $\lambda_{1}$\\
$\lambda_{2}$ & $\lambda_{1}$ & $\lambda_{2}$ & $\lambda_{2}$\\
$\lambda_{3}$ & $\lambda_{1}$ & $\lambda_{2}$ & $\lambda_{3}$%
\end{tabular}
\ \text{, }%
\begin{tabular}
[c]{l|lll}%
$S_{\left(  3\right)  }^{12}$ & $\lambda_{1}$ & $\lambda_{2}$ & $\lambda_{3}%
$\\\hline
$\lambda_{1}$ & $\lambda_{1}$ & $\lambda_{1}$ & $\lambda_{1}$\\
$\lambda_{2}$ & $\lambda_{1}$ & $\lambda_{2}$ & $\lambda_{3}$\\
$\lambda_{3}$ & $\lambda_{1}$ & $\lambda_{3}$ & $\lambda_{2}$%
\end{tabular}
\ \text{, }%
\]

\[%
\begin{tabular}
[c]{l|lll}%
$S_{\left(  3\right)  }^{15}$ & $\lambda_{1}$ & $\lambda_{2}$ & $\lambda_{3}%
$\\\hline
$\lambda_{1}$ & $\lambda_{1}$ & $\lambda_{1}$ & $\lambda_{3}$\\
$\lambda_{2}$ & $\lambda_{1}$ & $\lambda_{1}$ & $\lambda_{3}$\\
$\lambda_{3}$ & $\lambda_{3}$ & $\lambda_{3}$ & $\lambda_{1}$%
\end{tabular}
\ \text{, }%
\begin{tabular}
[c]{l|lll}%
$S_{\left(  3\right)  }^{16}$ & $\lambda_{1}$ & $\lambda_{2}$ & $\lambda_{3}%
$\\\hline
$\lambda_{1}$ & $\lambda_{1}$ & $\lambda_{1}$ & $\lambda_{3}$\\
$\lambda_{2}$ & $\lambda_{1}$ & $\lambda_{2}$ & $\lambda_{3}$\\
$\lambda_{3}$ & $\lambda_{3}$ & $\lambda_{3}$ & $\lambda_{1}$%
\end{tabular}
\ \text{, }%
\begin{tabular}
[c]{l|lll}%
$S_{\left(  3\right)  }^{17}$ & $\lambda_{1}$ & $\lambda_{2}$ & $\lambda_{3}%
$\\\hline
$\lambda_{1}$ & $\lambda_{1}$ & $\lambda_{2}$ & $\lambda_{2}$\\
$\lambda_{2}$ & $\lambda_{2}$ & $\lambda_{1}$ & $\lambda_{1}$\\
$\lambda_{3}$ & $\lambda_{2}$ & $\lambda_{1}$ & $\lambda_{1}$%
\end{tabular}
\ \text{, }%
\begin{tabular}
[c]{l|lll}%
$S_{\left(  3\right)  }^{18}$ & $\lambda_{1}$ & $\lambda_{2}$ & $\lambda_{3}%
$\\\hline
$\lambda_{1}$ & $\lambda_{1}$ & $\lambda_{2}$ & $\lambda_{3}$\\
$\lambda_{2}$ & $\lambda_{2}$ & $\lambda_{3}$ & $\lambda_{1}$\\
$\lambda_{3}$ & $\lambda_{3}$ & $\lambda_{1}$ & $\lambda_{2}$%
\end{tabular}
\ \text{, }%
\]

So in general the program \textit{com.f} of \cite{Plemmons} gives a list of
tables of the all abelian non-isomorphic semigroups of a certain
order\footnote{As the number of non-isomorphic semigroups increases very
quickly with the order $n$ (see table (\ref{hist})), the mentioned lists are
very large for higher orders. Note also that the semigroups $S_{\left(
3\right)  }^{4}$, $S_{\left(  3\right)  }^{5}$, $S_{\left(  3\right)  }^{8}$,
$S_{\left(  3\right)  }^{11}$, $S_{\left(  3\right)  }^{13}$ and $S_{\left(
3\right)  }^{14}$ are not given in the list for $n=3$ because they are not
abelian  (non-commutative).} (up to order $8$).

In section \ref{check}  we show that each semigroup that we
construct in this paper, to make the relations between $2$- and $3$- dimensional
isometries, are isomorphic to one of the semigroups of the lists given by
\cite{Plemmons}. This is a way to check that the iterative procedure (that we
shall present in sections \ref{Bianchi_related}, \ref{Bianchi_no_related}) to
find semigroups with zero elements and resonant decompositions is working well.

\subsection{Bianchi classification}

\label{clasification}

In ref. \cite{bian} Bianchi proposed a procedure to classify the
$3$-dimensional spaces that admit a $3$-dimensional isometry.
He had shown how to represent the generators as Killing vectors and how to get the
corresponding metrics,  and formulated Bianchi's theorem. 

Here we are interested  in study  of the possibility of relating, by
means of expansions, $2$- and $3$-dimensional isometry algebras. The $2$-dimensional algebras 
are given by
\begin{align}
\left[  X_{1},X_{2}\right]   &  =0\text{ or}\label{2dimalg}\\
\left[  X_{1},X_{2}\right]   &  =X_{1}\label{2dimalg_dos}%
\end{align}
The $3-$dimensional algebras are given in the following table,%

\begin{equation}%
\begin{tabular}
[c]{|l|l|}\hline
\textbf{Group} & \textbf{Algebra}\\\hline
type I & $\left[  X_{1},X_{2}\right]  =\left[  X_{1},X_{3}\right]  =\left[
X_{2},X_{3}\right]  =0$\\\hline
type II & $\left[  X_{1},X_{2}\right]  =\left[  X_{1},X_{3}\right]
=0,\ \ \ \left[  X_{2},X_{3}\right]  =X_{1}$\\\hline
type III & $\left[  X_{1},X_{2}\right]  =\left[  X_{2},X_{3}\right]
=0,\ \ \ \left[  X_{1},X_{3}\right]  =X_{1}$\\\hline
type IV & $\left[  X_{1},X_{2}\right]  =0,\ \ \ \left[  X_{1},X_{3}\right]
=X_{1},\ \ \ \left[  X_{2},X_{3}\right]  =X_{1}+X_{2}$\\\hline
type V & $\left[  X_{1},X_{2}\right]  =0,\ \ \ \left[  X_{1},X_{3}\right]
=X_{1},\ \ \ \left[  X_{2},X_{3}\right]  =X_{2}$\\\hline
type VI &
\begin{tabular}
[c]{l}%
$\left[  X_{1},X_{2}\right]  =0,\ \ \ \left[  X_{1},X_{3}\right]
=X_{1},\ \ \ \left[  X_{2},X_{3}\right]  =hX_{2},$\\
where $h\neq0,1$%
\end{tabular}
$\ .$\\\hline
type VII$_{1}$ & $\left[  X_{1},X_{2}\right]  =0,\ \ \ \left[  X_{1}%
,X_{3}\right]  =X_{2},\ \ \ \left[  X_{2},X_{3}\right]  =-X_{1}$\\\hline
type VII$_{2}$ &
\begin{tabular}
[c]{l}%
$\left[  X_{1},X_{2}\right]  =0,\ \ \ \left[  X_{1},X_{3}\right]
=X_{2},\ \ \ \left[  X_{2},X_{3}\right]  =-X_{1}+hX_{2},$\\
where$\ h\neq0\ (0<h<2).$%
\end{tabular}
\\\hline
type VIII & $\left[  X_{1},X_{2}\right]  =X_{1},\ \ \ \left[  X_{1}%
,X_{3}\right]  =2X_{2},\ \ \ \left[  X_{2},X_{3}\right]  =X_{3}$\\\hline
type IX & $\left[  X_{1},X_{2}\right]  =X_{3},\ \ \ \left[  X_{2}%
,X_{3}\right]  =X_{1},\ \ \ \left[  X_{3},X_{1}\right]  =X_{2}$\\\hline
\end{tabular}
\ \ \ \label{Bianchi}%
\end{equation}

\section{\textbf{Main proposition}}

The whole paper is dedicated to demonstrate that the following proposition is valid.

\begin{proposition}
Some but not all of the Bianchi algebras can be obtained as $S$-expansions of the
2-dimensional algebras (\ref{2dimalg}-\ref{2dimalg_dos}).
\end{proposition}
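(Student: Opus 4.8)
The plan is to prove the proposition in two logically independent halves: an \emph{existence} half, showing that at least some Bianchi algebras (types I, II, III, V) do arise, carried out by explicit construction; and an \emph{obstruction} half, showing that at least some (types IV, VI--IX) do not, carried out by exhibiting invariants that every $S$-expansion must preserve. Throughout, the seed algebra $\mathcal{G}$ is one of the two $2$-dimensional isometry algebras (\ref{2dimalg})--(\ref{2dimalg_dos}) with the decomposition $\mathcal{G}=V_0\oplus V_1$ of (\ref{subs_structure}), where $V_0=\langle X_2\rangle$ and $V_1=\langle X_1\rangle$, and the semigroups are finite abelian with a resonant partition $S=S_0\cup S_1$ satisfying (\ref{res_con}). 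The basic computational fact I would use repeatedly is that, by (\ref{s2'}) together with (\ref{re_algebra}), the only possibly nonzero brackets in a $0_S$-reduced resonant subalgebra are $[X_{(2,\alpha)},X_{(1,\beta)}]=-\lambda_\alpha\lambda_\beta X_1=-X_{(1,\gamma)}$, where $\lambda_\gamma=\lambda_\alpha\lambda_\beta\in S_1$, all brackets among $V_0$-type generators and all brackets among $V_1$-type generators vanishing identically.

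For the existence half I would treat types I, II, III and V one at a time. For each target I fix the seed (the abelian algebra (\ref{2dimalg}) for type I, the non-abelian algebra (\ref{2dimalg_dos}) for II, III, V), write down a small abelian semigroup (order $\le 4$) carrying a resonant partition and, where a dimension reduction is needed, a zero element $0_S$, and read off the reduced resonant subalgebra of dimension $|S_0|+|S_1|$ minus the removed $0_S$-sectors. Matching then reduces to choosing the semigroup so that the selector $\lambda_\alpha\lambda_\beta$ produces the required adjoint action on the abelian ideal spanned by the surviving $V_1$-type generators: an $S_0$-element acting as the identity on a single $V_1$-element yields the type III bracket (with a second, annihilating $S_0$-element supplying the central generator), an $S_0$-element acting as the identity on two $V_1$-elements yields $\mathrm{ad}=-\mathrm{Id}$ and hence type V, and an $S_0$-element acting nilpotently (sending one $V_1$-element to another and that one to $0_S$) yields the Heisenberg bracket of type II; type I is immediate from the abelian seed. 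In each case it remains only to verify, after relabeling and rescaling the generators, that the brackets coincide with the corresponding entry of (\ref{Bianchi}).

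For the obstruction half I would isolate two structural invariants preserved by $S$-expansion and by passage to resonant and $0_S$-reduced subalgebras. The first is solvability: iterating (\ref{s2'}) gives $\mathcal{G}_S^{(k)}\subseteq S\times\mathcal{G}^{(k)}$, so since both seeds are metabelian ($\mathcal{G}^{(2)}=0$) every expansion is solvable, which immediately excludes the simple algebras of types VIII and IX. The second is a \emph{single-generator} constraint: since every seed bracket is $0$ or $\pm X_1$, every bracket in any reduced resonant algebra is $\lambda_\gamma\cdot(0\text{ or }\pm X_1)$, hence $0$ or $\pm$ a single basis generator, never a genuine combination of two distinct generators; equivalently, the adjoint action of each $V_0$-type generator on the abelian $V_1$-ideal is, in the natural basis, a matrix with entries in $\{0,-1\}$ having at most one nonzero entry per column. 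This forbids the summand $X_1+X_2$ and the nonzero-eigenvalue Jordan block of type IV, forbids the complex (rotation) eigenvalues of type VII since such real matrices have spectrum in $\{0,\pm1\}$, and restricts the eigenvalue ratio to a discrete set incompatible with the continuous parameter $h\neq0,1$ of the type VI family. I would check that these data are isomorphism invariants --- any isomorphism carries the derived algebra to the derived algebra and preserves the $\mathrm{ad}$-spectrum up to overall scale --- and that both admissible splittings $(|S_0|,|S_1|)=(1,2)$ and $(2,1)$ are covered.

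The main obstacle is the exclusion of types IV, VI and VII, since solvability alone does not separate them from the obtainable metabelian types. The delicate point is to make the single-generator and spectral argument simultaneously uniform in the order of the semigroup --- it must rule out these types against \emph{all} finite abelian semigroups at once, not merely the small ones used in the constructions --- and fully basis-independent, so that it survives the arbitrary relabeling and rescaling allowed when matching against (\ref{Bianchi}). Establishing that the adjoint matrices arising from any resonant, $0_S$-reduced expansion necessarily have real spectrum contained in $\{0,\pm1\}$ with no nontrivial Jordan block at a nonzero eigenvalue is the crux; pinning down the precise boundary behaviour of the type VI family (isolated exceptional ratios versus the generic continuum) is the subtlest part of this analysis, and once it is settled the exclusion of IV, VI and VII follows directly from their spectral data in (\ref{Bianchi}).
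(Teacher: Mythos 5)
Your two-pronged strategy --- explicit constructions for types I, II, III, V and structural obstructions for the rest --- is exactly the paper's, and the existence half matches it in substance: the paper also works backwards from the target brackets to a partially filled order-$4$ multiplication table with a zero element $\lambda_4$ and one of the two resonant splittings of $\{\lambda_1,\dots,\lambda_4\}$, then exhibits explicit completions ($S_E^{(2)}$, $S_K^{(3)}$, $S_{N1}$, $S_{N2}$, $S_{N3}$) and cross-checks by computer that these exhaust the non-isomorphic possibilities. The obstruction half diverges in two places. For types VIII and IX the paper does not use solvability; it observes that among any three generators $\lambda_\alpha X_i$ only two values of $i$ occur, so by pigeonhole one of the three pairwise brackets is $\lambda_\alpha\lambda_\beta[X_i,X_i]=0$, while VIII and IX require all three to be nonzero. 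Your solvability argument is cleaner and manifestly basis-independent, which the paper's is not as stated. For types IV, VI and VII the paper uses the same single-generator observation as you --- a bracket is always one semigroup element times $X_1$, never a sum $X_1+X_2$ nor a numerical multiple $hX_1$ --- but stops there, with no discussion of change of basis; your reformulation in terms of the spectrum of the adjoint matrices is a genuine improvement in that direction.

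However, the step you yourself flag as the crux is not closed, and it is a real gap rather than a formality: a matrix with entries in $\{0,-1\}$ and at most one nonzero entry per column includes the swap, whose eigenvalues are $\pm1$, so type VI with $h=-1$ passes your spectral test. Excluding it needs an extra argument that the swap is incompatible with associativity: with $S_0=\{\lambda_2,0_S\}$, $S_1=\{\lambda_1,\lambda_3,0_S\}$, the swap means $\lambda_1\lambda_2=\lambda_3$ and $\lambda_3\lambda_2=\lambda_1$, whence $(\lambda_1\lambda_2)\lambda_2=\lambda_1$ while $\lambda_1(\lambda_2\lambda_2)$ is either $\lambda_1\lambda_2=\lambda_3$ or $0_S$ according as $\lambda_2^2=\lambda_2$ or $\lambda_2^2=0_S$ --- a contradiction either way. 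With that supplement your argument is complete and in fact more careful than the paper's, which never confronts the basis-independence issue at all.
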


This proposition will be proven by performing the $S$-expansion of
(\ref{2dimalg}-\ref{2dimalg_dos}) and by applying two known procedures which
permit to extract a smaller algebra from the expanded algebra. As we mention
in section \ref{resred}, those are the construction of resonant
subalgebra and the $0_S$-reduction of the resonant subalgebra.

\section{The Bianchi spaces related with $2$-dimensional isometries}

\label{Bianchi_related}

\subsection{The type III algebra}

\label{BtypeIII}

\begin{theorem}
An abelian semigroup of four elements $S=\left\{ \lambda_{1},\lambda_{2},\lambda_{3},\lambda_{4}\right\} $ 
with  decomposition
\begin{align}
S_{0}  &  =\left\{  \lambda_{2},\lambda_{3},\lambda_{4}\right\}
\label{decompIII}\\
S_{1}  &  =\left\{  \lambda_{1},\lambda_{4}\right\} \nonumber
\end{align}
compatible with resonance condition (\ref{res_con}) and possessing the following constraints 
in the multiplication table
\begin{equation}%
\begin{tabular}
[c]{l|llll}
& $\lambda_{1}$ & $\lambda_{2}$ & $\lambda_{3}$ & $\lambda_{4}$\\\hline
$\lambda_{1}$ &  & $\lambda_{4}$ & $\lambda_{1}$ & $\lambda_{4}$\\
$\lambda_{2}$ & $\lambda_{4}$ &  &  & $\lambda_{4}$\\
$\lambda_{3}$ & $\lambda_{1}$ &  &  & $\lambda_{4}$\\
$\lambda_{4}$ & $\lambda_{4}$ & $\lambda_{4}$ & $\lambda_{4}$ & $\lambda_{4}$%
\end{tabular}
\ \ \ \label{case3}%
\end{equation}
produces, after $0_S$-reduction of the resonant subalgebra of the $S$-expanded
algebra of the $2$-dimensional isometry (\ref{2dimalg_dos}), an algebra which
coincides with $3$-dimensional Bianchi type III isometry.
\end{theorem}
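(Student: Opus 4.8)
The plan is to locate the zero element of the semigroup, pass to the resonant subalgebra guaranteed by Theorem 4.2 of \cite{irs}, perform the $0_S$-reduction, and then simply read off the three surviving brackets and match them against the type III relations of table (\ref{Bianchi}). First I would note, from the last row (equivalently the last column) of the table (\ref{case3}), that $\lambda_4 \lambda_\alpha = \lambda_4$ for every $\alpha$, so $\lambda_4$ is the zero element $0_S$ of $S$. Taking the $2$-dimensional algebra (\ref{2dimalg_dos}) with the subspace structure (\ref{subs_structure}), where $V_0 = \langle X_2 \rangle$ and $V_1 = \langle X_1 \rangle$, and the decomposition (\ref{decompIII}), the resonant subalgebra is $\mathcal{G}_{S,R} = (S_0 \times V_0) \oplus (S_1 \times V_1)$, spanned by $\{\lambda_2 X_2, \lambda_3 X_2, \lambda_4 X_2\}$ together with $\{\lambda_1 X_1, \lambda_4 X_1\}$. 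Before proceeding I would confirm that the entries fixed in (\ref{case3}) are consistent with the resonance condition (\ref{res_con}); for instance $\lambda_1 \lambda_2 = \lambda_4 \in S_0$ and $\lambda_1 \lambda_3 = \lambda_1 \in S_1$, while the last row keeps $\lambda_4$ in both blocks, so $S = S_0 \cup S_1$ is genuinely resonant with $\lambda_4 \in S_0 \cap S_1$.

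Next I would carry out the $0_S$-reduction. Since $0_S = \lambda_4$, imposing $\lambda_4 X_i = 0$ removes the generators $\lambda_4 X_2$ and $\lambda_4 X_1$, leaving exactly three generators of $\mathcal{G}_{S,R}^{\text{red}}$, which I would denote $T_1 = \lambda_1 X_1$, $T_2 = \lambda_2 X_2$ and $T_3 = \lambda_3 X_2$. Applying the Lie product (\ref{s2'}) together with the relation $[X_1, X_2] = X_1$ of (\ref{2dimalg_dos}) and consulting (\ref{case3}) for the semigroup products, the three independent brackets are $[T_1, T_2] = \lambda_1 \lambda_2\, X_1 = \lambda_4 X_1 = 0$, then $[T_1, T_3] = \lambda_1 \lambda_3\, X_1 = \lambda_1 X_1 = T_1$, and finally $[T_2, T_3] = \lambda_2 \lambda_3\, [X_2, X_2] = 0$.

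A point I would stress is why the blank cells of (\ref{case3}), namely $\lambda_1^2$, $\lambda_2^2$, $\lambda_2 \lambda_3$ and $\lambda_3^2$, can be left unspecified without altering the conclusion: the products $\lambda_2^2$, $\lambda_2 \lambda_3$, $\lambda_3^2$ enter the reduced algebra only through $[X_2, X_2] = 0$, and $\lambda_1^2$ only through $[X_1, X_1] = 0$, so any completion of the table consistent with associativity and resonance produces the same reduced algebra. This is precisely the mechanism behind the remark that more than one semigroup yields the same result. Relabeling $T_1 \mapsto X_1$, $T_2 \mapsto X_2$, $T_3 \mapsto X_3$, the brackets become $[X_1, X_2] = [X_2, X_3] = 0$ and $[X_1, X_3] = X_1$, which is exactly the Bianchi type III algebra of table (\ref{Bianchi}). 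The only genuine care needed is the bookkeeping of the shared element $\lambda_4$, which lives in both $S_0$ and $S_1$ and simultaneously plays the role of $0_S$; once one keeps track of this overlap and confirms that the reduction correctly discards both $\lambda_4 X_1$ and $\lambda_4 X_2$, the remainder of the argument is a direct evaluation of three commutators, so I expect no serious obstacle beyond that bookkeeping.
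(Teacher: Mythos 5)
Your proposal is correct, and the computational core is the same as the paper's: identify $\lambda_4$ as $0_S$, form the resonant subalgebra $(S_0\times V_0)\oplus(S_1\times V_1)=\{\lambda_2X_2,\lambda_3X_2,\lambda_4X_2,\lambda_1X_1,\lambda_4X_1\}$, strip the $\lambda_4$ generators, evaluate the three surviving brackets, and rename to match table (\ref{Bianchi}). The difference is one of direction. The paper's proof runs forwards: it starts from an \emph{undetermined} table with only the zero element and the decomposition (\ref{decompIII}) imposed, observes that closure of the reduced algebra forces $\lambda_1\lambda_2,\lambda_1\lambda_3\in\{\lambda_1,\lambda_4\}$, enumerates the four resulting cases, discards the one giving type I and the non-associative one, and thereby \emph{derives} the constraints (\ref{case3}) as one of the two choices yielding type III. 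You instead take (\ref{case3}) as given and verify the conclusion, which is a perfectly adequate (indeed more direct) proof of the theorem as literally stated; what you lose is the explanation of where the table comes from and the fact that the alternative choice $\lambda_1\lambda_2=\lambda_1$, $\lambda_1\lambda_3=\lambda_4$ also yields type III. Your explicit remark that the blank cells $\lambda_1^2,\lambda_2^2,\lambda_2\lambda_3,\lambda_3^2$ cannot affect the reduced brackets (they only ever multiply $[X_1,X_1]=0$ or $[X_2,X_2]=0$) is correct and is left implicit in the paper; it is a worthwhile addition, though note that the existence of an associative, resonant completion of the table is still needed and is deferred by the paper to the proposition and explicit semigroups ($S_K^{(3)}$, $S_{N1}$, etc.) that follow.
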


\begin{proof}
Let's begin with an unknown semigroup $S=\left\{  \lambda_{1},\lambda
_{2},\lambda_{3},\lambda_{4}\right\}  $ where we impose the following conditions:

$(i)$ $\lambda_{4}$ is a zero of the semigroup, so the table of
multiplication law is restricted to the form%
\[%
\begin{tabular}
[c]{l|llll}
& $\lambda_{1}$ & $\lambda_{2}$ & $\lambda_{3}$ & $\lambda_{4}$\\\hline
$\lambda_{1}$ &  &  &  & $\lambda_{4}$\\
$\lambda_{2}$ &  &  &  & $\lambda_{4}$\\
$\lambda_{3}$ &  &  &  & $\lambda_{4}$\\
$\lambda_{4}$ & $\lambda_{4}$ & $\lambda_{4}$ & $\lambda_{4}$ & $\lambda_{4}$%
\end{tabular}
\
\]
where the empty spaces must be filled in a way such that it is in fact an
abelian semigroup, i.e., closed, associative and commutative. In order to get
a smaller algebra\footnote{Note that we start with a $2$-dimensional algebra
and we are looking for a $3$-dimensional one. The $S$-expanded algebra with a
semigroup of order 4 is $8$-dimensional and the reduced is
$6$-dimensional. So to obtain a $3$-dimensional algebra we need to extract an
even smaller algebra. This is only possible by extracting a resonant
subalgebra.} we also demand

$(ii)$ that it contains a decomposition like that given in (\ref{decompIII}) which is resonant, i.e., that satisfies equation (\ref{res_con}).

Then, according to (\ref{re_algebra}), the resonant subalgebra of
$\mathcal{G}_{S,R} = S\times\mathcal{G}$ is given by\footnote{Note that we have an
abuse of notation here. The last equation must be read as: the resonant
subalgebra is generated by the set of generators that appears in the right
hand side.}%
\begin{align}
\mathcal{G}_{S,R}  &  =\left(  S_{0}\times V_{0}\right)  \oplus\left(
S_{1}\times V_{1}\right) \label{IIIR}\\
&  =\left\{  \lambda_{2}X_{2},\ \lambda_{3}X_{2},\ \lambda_{4}X_{2}\right\}
\oplus\left\{  \lambda_{1}X_{1},\ \lambda_{4}X_{1}\right\} \nonumber\\
&  =\left\{  \lambda_{2}X_{2},\ \lambda_{3}X_{2},\ \lambda_{4}X_{2}%
,\ \lambda_{1}X_{1},\ \lambda_{4}X_{1}\right\} \nonumber
\end{align}
Now we have to extract an even smaller algebra by means of a $0_{S}%
$-reduction. This is done by just taking off from (\ref{IIIR}) the elements
that contain the zero element, $\lambda_{4}$. Therefore, the
reduction of the resonant subalgebra is given by%
\[
\mathcal{G}_{S,R}^{\text{red}}=\left\{  \lambda_{2}X_{2},\ \lambda_{3}%
X_{2},\ \lambda_{1}X_{1}\right\}
\]
with the following commutation relations%
\begin{align*}
\left[  \lambda_{2}X_{2},\lambda_{3}X_{2}\right]   &  =0\\
\left[  \lambda_{2}X_{2},\lambda_{1}X_{1}\right]   &  =-\lambda_{1}\lambda
_{2}X_{1}\\
\left[  \lambda_{3}X_{2},\lambda_{1}X_{1}\right]   &  =-\lambda_{1}\lambda
_{3}X_{1}%
\end{align*}
In order for this algebra to be closed we should choose:

$(a)$ $\lambda_{1}\lambda_{2}=\lambda_{1}$ or $\lambda_{1}\lambda_{2}%
=\lambda_{4}$ and

$(b)$ $\lambda_{1}\lambda_{3}=\lambda_{1}$ or $\lambda_{1}\lambda_{3}%
=\lambda_{4}$.

So we are lead with four possibilities to construct a closed algebra:

$(i)$ $\lambda_{1}\lambda_{2}=\lambda_{1}\lambda_{3}=\lambda_{4}$,

$(ii)$ $\lambda_{1}\lambda_{2}=\lambda_{1}$ and $\lambda_{1}\lambda
_{3}=\lambda_{4}$,

$(iii)$ $\lambda_{1}\lambda_{2}=\lambda_{4}$ and $\lambda_{1}\lambda
_{3}=\lambda_{1}$,

$(iv)$ $\lambda_{1}\lambda_{2}=\lambda_{1}\lambda_{3}=\lambda_{1}$.

The case $(i)$ will lead to translations in $3$ dimensions, i.e., to the type
I algebra\footnote{This case will be analyzed later.}. It can be checked that
the case $(iv)$ it is not useful, because in this case the
multiplication law is non associative. On the other hand,  it can
be seen that both $(ii)$ and $(iii)$ will lead to the type III algebra. In
fact, in case $(ii)$ we have
\begin{align*}
\left[  \lambda_{2}X_{2},\lambda_{3}X_{2}\right]   &  =0\\
\left[  \lambda_{2}X_{2},\lambda_{1}X_{1}\right]   &  =-\lambda_{1}X_{1}\\
\left[  \lambda_{3}X_{2},\lambda_{1}X_{1}\right]   &  =-\lambda_{4}%
X_{1}=0\text{ (}\lambda_{4}\text{ is a  zero element)}%
\end{align*}
renaming the generators as%
\begin{align*}
Y_{1}  &  =\lambda_{1}X_{1}\\
Y_{2}  &  =\lambda_{3}X_{2}\\
Y_{3}  &  =\lambda_{2}X_{2}%
\end{align*}
we immediately recognize the type III algebra (see table (\ref{Bianchi})). In
case $(iii)$ we would have%
\begin{align*}
\left[  \lambda_{2}X_{2},\lambda_{3}X_{2}\right]   &  =0\\
\left[  \lambda_{2}X_{2},\lambda_{1}X_{1}\right]   &  =-\lambda_{4}%
X_{1}=0\text{ (}\lambda_{4}\text{ is a zero element)}\\
\left[  \lambda_{3}X_{2},\lambda_{1}X_{1}\right]   &  =-\lambda_{1}X_{1}%
\end{align*}
and we recover again the type III algebra by renaming the
generators as%
\begin{align*}
Y_{1}  &  =\lambda_{1}X_{1}\\
Y_{2}  &  =\lambda_{2}X_{2}\\
Y_{3}  &  =\lambda_{3}X_{2}%
\end{align*}

We choose to study case $(iii)$ to construct a semigroup that leads to Type III algebra, although the case $(ii)$ could also be studied to generate other semigroups leading to the same result. So,
the table describing the multiplication law case $(iii)$ is given by%
\[%
\begin{tabular}
[c]{l|llll}
& $\lambda_{1}$ & $\lambda_{2}$ & $\lambda_{3}$ & $\lambda_{4}$\\\hline
$\lambda_{1}$ &  & $\lambda_{4}$ & $\lambda_{1}$ & $\lambda_{4}$\\
$\lambda_{2}$ & $\lambda_{4}$ &  &  & $\lambda_{4}$\\
$\lambda_{3}$ & $\lambda_{1}$ &  &  & $\lambda_{4}$\\
$\lambda_{4}$ & $\lambda_{4}$ & $\lambda_{4}$ & $\lambda_{4}$ & $\lambda_{4}$%
\end{tabular}
\]
where the empty spaces must be filled in such a way that satisfy
associativity and the decomposition (\ref{decompIII}) satisfies the resonant
condition (\ref{res_con}).
\end{proof}

\begin{proposition}
There are semigroups that fit multiplication table (\ref{case3}) and resonant condition 
(\ref{decompIII}).
\end{proposition}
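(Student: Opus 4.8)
The plan is to exhibit one explicit completion of the partial multiplication table (\ref{case3}) and to verify that it is an abelian semigroup whose decomposition (\ref{decompIII}) is resonant. Since the statement only asserts existence, a single witness suffices, so I would not attempt to classify all completions.

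First I would identify which entries of the table are still free. Reading off (\ref{case3}), the products $\lambda_1\lambda_2$, $\lambda_1\lambda_3$, and the entire $\lambda_4$ row and column are already fixed, and commutativity mirrors the lower triangle against the upper one. The only undetermined entries are then $\lambda_1\lambda_1$, $\lambda_2\lambda_2$, $\lambda_2\lambda_3=\lambda_3\lambda_2$, and $\lambda_3\lambda_3$. The resonance condition (\ref{res_con}) applied to the decomposition (\ref{decompIII}) constrains these: from $S_1\times S_1\subset S_0$ I need $\lambda_1\lambda_1\in S_0=\{\lambda_2,\lambda_3,\lambda_4\}$, and from $S_0\times S_0\subset S_0$ I need $\lambda_2\lambda_2,\lambda_2\lambda_3,\lambda_3\lambda_3\in S_0$. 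I would also record that the remaining inclusion $S_0\times S_1\subset S_1$ involves only entries already fixed in (\ref{case3}), and is satisfied there, so resonance will hold automatically once the free entries are placed inside $S_0$.

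Next I would make the choice that collapses the associativity check: let $\lambda_4$ play the role of a zero (already imposed by hypothesis $(i)$ in the preceding proof) and let $\lambda_3$ play the role of an identity. This is consistent with (\ref{case3}), since $\lambda_3\lambda_1=\lambda_1$ and $\lambda_3\lambda_4=\lambda_4$ are exactly the identity action, and it forces $\lambda_3\lambda_3=\lambda_3$ and $\lambda_2\lambda_3=\lambda_2$, both lying in $S_0$ as required. It then remains only to fix $\lambda_1\lambda_1$ and $\lambda_2\lambda_2$; I would take $\lambda_1\lambda_1=\lambda_4$, making $\lambda_1$ nilpotent, and $\lambda_2\lambda_2=\lambda_2$, making $\lambda_2$ idempotent, both again inside $S_0$.

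Finally I would verify associativity, which is the real content. Because $\lambda_3$ is an identity and $\lambda_4$ is a zero, any triple containing at least one of them associates trivially, so the verification reduces to the finitely many triples drawn from the two-element core $\{\lambda_1,\lambda_2\}$. With $\lambda_1\lambda_2=\lambda_1\lambda_1=\lambda_4$ and $\lambda_2\lambda_2=\lambda_2$, every such triple that involves at least one factor $\lambda_1$ evaluates to $\lambda_4$ under both bracketings, and the all-$\lambda_2$ triple gives $\lambda_2$ on each side; hence associativity holds throughout. The resulting table is commutative and closed, extends (\ref{case3}), and its decomposition (\ref{decompIII}) satisfies (\ref{res_con}), which establishes the claim. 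The main obstacle is precisely associativity, since it is a ternary condition over a four-element table; the device of reading $\lambda_3$ as an identity and $\lambda_4$ as a zero is what tames it, restricting the check to the core $\{\lambda_1,\lambda_2\}$. As an optional consistency check, one may match the completed table against the lists of \cite{Plemmons}, in the spirit of section \ref{check}.
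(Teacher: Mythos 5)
Your proposal is correct and follows the same route as the paper, which proves this proposition by exhibiting explicit witnesses ($S_{K}^{(3)}$, $S_{N1}$, $S_{E}^{(2)}$); indeed the table you construct (with $\lambda_{1}\lambda_{1}=\lambda_{4}$, $\lambda_{2}\lambda_{2}=\lambda_{2}\lambda_{3}=\lambda_{2}$, $\lambda_{3}\lambda_{3}=\lambda_{3}$) is exactly the paper's first example $S_{K}^{(3)}$. Your observation that $\lambda_{3}$ is an identity and $\lambda_{4}$ a zero, reducing the associativity check to triples from $\{\lambda_{1},\lambda_{2}\}$, is a nice improvement over the paper's unspecified ``direct but a little tedious'' verification.
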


We give several examples of semigroups of this type.

\subsubsection{Semigroup $S_{K}^{(3)}$}

Consider the semigroup $S_{K}^{(n)}$, with the multiplication law defined by%
\begin{align}
\lambda_{\alpha}\lambda_{\beta}  &  =\lambda_{\min\left\{  \alpha
,\beta\right\}  }\text{, \ \ }\alpha+\beta>n\label{SIK}\\
\lambda_{\alpha}\lambda_{\beta}  &  =\lambda_{n+1}\text{, \ \ \ \ \ \ \ \ }%
\alpha+\beta\leq n\nonumber\\
\alpha,\beta &  =1,2,...,n\nonumber
\end{align}
It is directly seen that for $n=3$ the table of multiplication law%
\[%
\begin{tabular}
[c]{l|llll}
& $\lambda_{1}$ & $\lambda_{2}$ & $\lambda_{3}$ & $\lambda_{4}$\\\hline
$\lambda_{1}$ & $\lambda_{4}$ & $\lambda_{4}$ & $\lambda_{1}$ & $\lambda_{4}%
$\\
$\lambda_{2}$ & $\lambda_{4}$ & $\lambda_{2}$ & $\lambda_{2}$ & $\lambda_{4}%
$\\
$\lambda_{3}$ & $\lambda_{1}$ & $\lambda_{2}$ & $\lambda_{3}$ & $\lambda_{4}%
$\\
$\lambda_{4}$ & $\lambda_{4}$ & $\lambda_{4}$ & $\lambda_{4}$ & $\lambda_{4}$%
\end{tabular}
\ \ \
\]
fits with the form of the table (\ref{case3}), where we have filled those
empty spaces that appears in (\ref{case3}). Therefore, an expansion with the
semigroup $S_{K}^{(3)}$ reproduces the type III algebra after a reduction
of the resonant subalgebra.

\subsubsection{Semigroup $S_{N1}$}

Let's consider the following table of multiplication%
\[%
\begin{tabular}
[c]{l|llll}
& $\lambda_{1}$ & $\lambda_{2}$ & $\lambda_{3}$ & $\lambda_{4}$\\\hline
$\lambda_{1}$ & $\lambda_{4}$ & $\lambda_{4}$ & $\lambda_{1}$ & $\lambda_{4}%
$\\
$\lambda_{2}$ & $\lambda_{4}$ & $\lambda_{2}$ & $\lambda_{4}$ & $\lambda_{4}%
$\\
$\lambda_{3}$ & $\lambda_{1}$ & $\lambda_{4}$ & $\lambda_{3}$ & $\lambda_{4}%
$\\
$\lambda_{4}$ & $\lambda_{4}$ & $\lambda_{4}$ & $\lambda_{4}$ & $\lambda_{4}$%
\end{tabular}
\ \ \
\]
Here we have filled the empty spaces of (\ref{case3}) in another way leading
to a new semigroup that verifies the required properties. It can be
directly shown that this product is associative, satisfies the resonant
condition and fits the form of the table (\ref{case3}). The
proof is direct but a little tedious. Therefore this semigroup,
$S_{N1}$, also reproduces the type III algebra and it is not isomorphic to the
previous semigroup, $S_{K}^{(3)}$.

\subsubsection{The semigroup $S_{E}^{\left(  2\right)  }$, another way to
obtain the type III algebra}

In order to show that there are even other semigroups that lead
to the type III algebra we consider the semigroup $S_{E}^{\left(  2\right)  }$
introduced in ref. \cite{irs} for $n=2$. Its multiplication law is given by the following table%

\[%
\begin{tabular}
[c]{l|llll}
& $\lambda_{0}$ & $\lambda_{1}$ & $\lambda_{2}$ & $\lambda_{3}$\\\hline
$\lambda_{0}$ & $\lambda_{0}$ & $\lambda_{1}$ & $\lambda_{2}$ & $\lambda_{3}%
$\\
$\lambda_{1}$ & $\lambda_{1}$ & $\lambda_{2}$ & $\lambda_{3}$ & $\lambda_{3}%
$\\
$\lambda_{2}$ & $\lambda_{2}$ & $\lambda_{3}$ & $\lambda_{3}$ & $\lambda_{3}%
$\\
$\lambda_{3}$ & $\lambda_{3}$ & $\lambda_{3}$ & $\lambda_{3}$ & $\lambda_{3}$%
\end{tabular}
\]
and its resonant partition is%
\begin{align*}
S_{0}  &  =\left\{  \lambda_{0},\lambda_{2},\lambda_{3}\right\} \\
S_{1}  &  =\left\{  \lambda_{1},\lambda_{3}\right\}
\end{align*}

The $0_S$-reduction of the resonant subalgebra is given by%
\[
\mathcal{G}_{S,R}^{\text{red}}=\left\{  \lambda_{0}X_{2},\ \lambda_{2}%
X_{2},\ \lambda_{1}X_{1}\right\}
\]
with commutation relations%

\begin{align*}
\left[  \lambda_{0}X_{2},\lambda_{2}X_{2}\right]   &  =0\\
\left[  \lambda_{0}X_{2},\lambda_{1}X_{1}\right]   &  =-\lambda_{1}X_{1}\\
\left[  \lambda_{2}X_{2},\lambda_{1}X_{1}\right]   &  =0
\end{align*}
Renaming the generators as
\begin{align*}
Y_{1}  &  =\lambda_{1}X_{1}\\
Y_{2}  &  =\lambda_{2}X_{2}\\
Y_{3}  &  =\lambda_{0}X_{2}%
\end{align*}
we obtain again the type III algebra%
\[
\left[  Y_{1},Y_{2}\right]  =\left[  Y_{2},Y_{3}\right]  =0,\ \ \ \left[
Y_{1},Y_{3}\right]  =Y_{1}\text{.}%
\]

\subsection{\textbf{The type II and V algebras}}

The natural question here is: is it possible to generate other type of Bianchi
algebras from the isometries in $2$ dimensions? To answer this
question we will continue the procedure of section \ref{BtypeIII},
considering a semigroup $S=\left\{  \lambda_{1},\lambda_{2},\lambda
_{3},\lambda_{4}\right\}  $ where $\lambda_{4}$ is a zero element, but modify
resonant decomposition (\ref{decompIII}). The decomposition can be chosen in another way, 
as can be seen in the following

\begin{lemma}
A resonant decomposition
\begin{align}
S_{0}  &  =\left\{  \lambda_{2},\lambda_{4}\right\} \nonumber\\
S_{1}  &  =\left\{  \lambda_{1},\lambda_{3},\lambda_{4}\right\}
\label{newdes}%
\end{align}
satisfying resonant condition (\ref{res_con}) can produce,  
after $0_S$-reduction of resonant subalgebra of the
$S$-expanded algebra of 2-dimensional isometry (\ref{2dimalg_dos}), a 3-dimensional algebra. 
\end{lemma}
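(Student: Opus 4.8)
The plan is to reproduce, step by step, the construction carried out for the type III algebra, but with the modified decomposition (\ref{newdes}). First I would record what the resonance hypothesis forces on the multiplication table. Writing $S_0=\left\{\lambda_2,\lambda_4\right\}$ and $S_1=\left\{\lambda_1,\lambda_3,\lambda_4\right\}$ — so that the zero $\lambda_4$ now belongs to both blocks, exactly as happened for type III — condition (\ref{res_con}) demands $\lambda_2^2\in S_0$, the products $\lambda_1^2,\lambda_1\lambda_3,\lambda_3^2\in S_0$, and, most importantly for what follows, $\lambda_1\lambda_2,\lambda_2\lambda_3\in S_1$. These are all the constraints that enter the argument.

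Next I would invoke Theorem 4.2 of \cite{irs}, which guarantees that with the subspace structure (\ref{subs_structure}), where $V_0$ is generated by $X_2$ and $V_1$ by $X_1$, and with a resonant decomposition, the set
\begin{align*}
\mathcal{G}_{S,R} &= \left(S_0\times V_0\right)\oplus\left(S_1\times V_1\right)\\
&= \left\{\lambda_2 X_2,\ \lambda_4 X_2\right\}\oplus\left\{\lambda_1 X_1,\ \lambda_3 X_1,\ \lambda_4 X_1\right\}
\end{align*}
is a subalgebra, with five generators. Since $\lambda_4$ is a zero of $S$, a $0_S$-reduction is available; removing every generator carrying the factor $\lambda_4$ deletes exactly $\lambda_4 X_2$ and $\lambda_4 X_1$, leaving
\[
\mathcal{G}_{S,R}^{\text{red}}=\left\{\lambda_2 X_2,\ \lambda_1 X_1,\ \lambda_3 X_1\right\},
\]
which is generated by exactly three elements. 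This count is the whole content of the claim: $\left|S_0\right|-1=1$ surviving generator from the $V_0$ sector plus $\left|S_1\right|-1=2$ from the $V_1$ sector give $3$.

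It then remains to confirm that these three generators close into a Lie algebra. Using $\left[X_1,X_2\right]=X_1$ from (\ref{2dimalg_dos}), hence $\left[X_1,X_1\right]=0$ and $\left[X_2,X_1\right]=-X_1$, the brackets are
\begin{align*}
\left[\lambda_1 X_1,\lambda_3 X_1\right] &= 0,\\
\left[\lambda_2 X_2,\lambda_1 X_1\right] &= -\lambda_1\lambda_2\, X_1,\\
\left[\lambda_2 X_2,\lambda_3 X_1\right] &= -\lambda_2\lambda_3\, X_1.
\end{align*}
Closure requires that $\lambda_1\lambda_2$ and $\lambda_2\lambda_3$ each lie in $S_1$, so that the right-hand sides are either surviving generators or zero; but this is precisely the resonance constraint recorded in the first step. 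Hence closure is automatic and $\mathcal{G}_{S,R}^{\text{red}}$ is a $3$-dimensional Lie algebra, which proves the lemma.

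The step I expect to be the real obstacle is not this dimension count but the implicit existence question the lemma leaves open: one must still exhibit an associative, commutative table that simultaneously realizes (\ref{newdes}) as a resonant decomposition with $\lambda_4$ as zero, and checking associativity against the resonance requirements is the delicate part; I would defer it to the explicit semigroups constructed afterwards. I would also note, as a bridge to the rest of this subsection, that the isomorphism type of $\mathcal{G}_{S,R}^{\text{red}}$ is selected by the values of $\lambda_1\lambda_2$ and $\lambda_2\lambda_3$: choosing $\lambda_1\lambda_2=\lambda_1$ and $\lambda_2\lambda_3=\lambda_3$ yields type V, whereas $\lambda_1\lambda_2=\lambda_4$ and $\lambda_2\lambda_3=\lambda_1$ yields type II, the two cases announced in the heading.
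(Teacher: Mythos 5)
Your proposal is correct and follows essentially the same route as the paper: write down the three surviving generators $\lambda_{2}X_{2},\lambda_{1}X_{1},\lambda_{3}X_{1}$ of the $0_S$-reduced resonant subalgebra, compute the same three brackets, and observe that the resonance condition $S_{0}\times S_{1}\subset S_{1}$ forces closure. You merely spell out the resonance constraints and the dimension count more explicitly than the paper does, and your closing remark correctly identifies that the existence of an actual associative table is deferred to the explicit semigroups constructed afterwards.
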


\begin{proof}
The reduction of the resonant subalgebra is given by%
\[
\mathcal{G}_{S,R}^{\text{red}}=\left\{  \lambda_{2}X_{2},\ \lambda_{1}%
X_{1},\ \lambda_{3}X_{1}\right\}
\]
and the commutation relations are given by%
\begin{align}
\left[  \lambda_{2}X_{2},\lambda_{1}X_{1}\right]   &  =-\lambda_{1}\lambda
_{2}X_{1}\label{seg}\\
\left[  \lambda_{2}X_{2},\lambda_{3}X_{1}\right]   &  =-\lambda_{2}\lambda
_{3}X_{1}\nonumber\\
\left[  \lambda_{1}X_{1},\lambda_{3}X_{1}\right]   &  =0\nonumber
\end{align}
Resonant condition  (\ref{res_con}) guarantees that (\ref{seg}) is a closed algebra.

\end{proof}

Here we have different possibilities in order to make this algebra closed.

\subsubsection{Type II algebra and the $S_{N2}$ semigroup}

\begin{theorem}
An abelian semigroup of four elements $S=\left\{ \lambda_{1},\lambda_{2},\lambda_{3},\lambda_{4}\right\} $ 
satisfying the conditions of Lemma 5 and possessing the following constraints in the multiplication table
\begin{equation}%
\begin{tabular}
[c]{l|llll}
& $\lambda_{1}$ & $\lambda_{2}$ & $\lambda_{3}$ & $\lambda_{4}$\\\hline
$\lambda_{1}$ &  & $\lambda_{3}$ &  & $\lambda_{4}$\\
$\lambda_{2}$ & $\lambda_{3}$ &  & $\lambda_{4}$ & $\lambda_{4}$\\
$\lambda_{3}$ &  & $\lambda_{4}$ &  & $\lambda_{4}$\\
$\lambda_{4}$ & $\lambda_{4}$ & $\lambda_{4}$ & $\lambda_{4}$ & $\lambda_{4}$%
\end{tabular}
\label{tabla_typeII}%
\end{equation}
produces, after $0_S$-reduction of the resonant subalgebra of the $S$-expanded
algebra of the $2$-dimensional isometry (\ref{2dimalg_dos}), an algebra which
coincides with $3$-dimensional Bianchi type II isometry.
\end{theorem}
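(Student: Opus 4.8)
The plan is to take as given the $0_S$-reduced resonant subalgebra already produced in the preceding Lemma, whose generators are $\lambda_2 X_2,\ \lambda_1 X_1,\ \lambda_3 X_1$ and whose brackets are recorded in (\ref{seg}). The only semigroup products that enter those brackets are $\lambda_1\lambda_2$ and $\lambda_2\lambda_3$; the third bracket $[\lambda_1 X_1,\lambda_3 X_1]$ vanishes identically because $[X_1,X_1]=0$ in the $2$-dimensional algebra (\ref{2dimalg_dos}). Hence the whole computation reduces to reading these two products off the constraint table (\ref{tabla_typeII}), namely $\lambda_1\lambda_2=\lambda_3$ and $\lambda_2\lambda_3=\lambda_4$.

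First I would substitute these two values into (\ref{seg}). Since $\lambda_4$ is the zero element of the semigroup, the product $\lambda_2\lambda_3=\lambda_4$ kills the second bracket in the reduced algebra, while $\lambda_1\lambda_2=\lambda_3$ turns the first into $[\lambda_2 X_2,\lambda_1 X_1]=-\lambda_3 X_1$. Thus the reduced algebra carries a single non-trivial bracket,
\begin{equation*}
[\lambda_2 X_2,\lambda_1 X_1]=-\lambda_3 X_1,\qquad [\lambda_2 X_2,\lambda_3 X_1]=0,\qquad [\lambda_1 X_1,\lambda_3 X_1]=0 .
\end{equation*}
Relabelling $Y_1=\lambda_3 X_1$, $Y_2=\lambda_1 X_1$, $Y_3=\lambda_2 X_2$ gives $[Y_1,Y_2]=[Y_1,Y_3]=0$ and $[Y_2,Y_3]=Y_1$, which is precisely the Bianchi type~II algebra of table (\ref{Bianchi}). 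This algebraic recognition step is immediate.

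The only genuinely delicate point --- and the step I expect to be the main obstacle --- is to verify that the partially specified table (\ref{tabla_typeII}) can actually be completed to a bona fide abelian semigroup meeting every hypothesis of the preceding Lemma: commutativity, associativity, $\lambda_4$ acting as a zero, and the decomposition (\ref{newdes}) being resonant in the sense of (\ref{res_con}). Resonance forces the blank entries $\lambda_1\lambda_1,\ \lambda_1\lambda_3,\ \lambda_3\lambda_3$ (all lying in $S_1\times S_1$) and $\lambda_2\lambda_2$ (lying in $S_0\times S_0$) to take values in $S_0=\{\lambda_2,\lambda_4\}$, and each admissible assignment must then be tested against associativity on all triples. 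I would close this gap by exhibiting one explicit completion, the semigroup $S_{N2}$ named in the heading, and checking associativity directly; this is a finite but tedious verification, and it is what makes the hypotheses of the theorem non-vacuous rather than the recognition of type~II, which follows at once from the substitution above.
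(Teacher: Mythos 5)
Your proposal is correct and follows essentially the same route as the paper: substitute $\lambda_{1}\lambda_{2}=\lambda_{3}$ and $\lambda_{2}\lambda_{3}=\lambda_{4}$ into the brackets (\ref{seg}) of the reduced resonant subalgebra, relabel $Y_{1}=\lambda_{3}X_{1}$, $Y_{2}=\lambda_{1}X_{1}$, $Y_{3}=\lambda_{2}X_{2}$ to recognize type II, and then note that the theorem is only non-vacuous once an explicit completion of the table to an abelian semigroup is exhibited. The paper likewise defers that last existence point to a separate proposition and settles it by displaying $S_{N2}$, exactly as you propose.
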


\begin{proof}
To reproduce the type II algebra we have to choose, for example,
\begin{equation}
\lambda_{1}\lambda_{2}=\lambda_{3}\text{ and }\lambda_{2}\lambda_{3}%
=\lambda_{4} \label{sec_cond}%
\end{equation}
In that case the commutation relations (\ref{seg})  take the form%
\begin{align*}
\left[  \lambda_{2}X_{2},\lambda_{1}X_{1}\right]   &  =-\lambda_{3}X_{1}\\
\left[  \lambda_{2}X_{2},\lambda_{3}X_{1}\right]   &  =0\\
\left[  \lambda_{1}X_{1},\lambda_{3}X_{1}\right]   &  =0
\end{align*}
and renaming the generators as
\begin{align*}
Y_{1}  &  =\lambda_{3}X_{1}\\
Y_{2}  &  =\lambda_{1}X_{1}\\
Y_{3}  &  =\lambda_{2}X_{2}%
\end{align*}
we obtain the type II algebra%
\[
\left[  Y_{1},Y_{2}\right]  =\left[  Y_{1},Y_{3}\right]  =0,\ \ \ \left[
Y_{2},Y_{3}\right]  =Y_{1}%
\]

But in order for this result to be true, we must provide
 an explicit semigroup that satisfies the conditions
(\ref{sec_cond}). Until now our table has the form%
\[%
\begin{tabular}
[c]{l|llll}
& $\lambda_{1}$ & $\lambda_{2}$ & $\lambda_{3}$ & $\lambda_{4}$\\\hline
$\lambda_{1}$ &  & $\lambda_{3}$ &  & $\lambda_{4}$\\
$\lambda_{2}$ & $\lambda_{3}$ &  & $\lambda_{4}$ & $\lambda_{4}$\\
$\lambda_{3}$ &  & $\lambda_{4}$ &  & $\lambda_{4}$\\
$\lambda_{4}$ & $\lambda_{4}$ & $\lambda_{4}$ & $\lambda_{4}$ & $\lambda_{4}$%
\end{tabular}
\]
and the empty spaces must be filled in such a way that this table defines an
associative, commutative product and such that the decomposition
(\ref{newdes}) satisfies the resonant condition (\ref{res_con}).
\end{proof}

\begin{proposition}
There are semigroups that fit multiplication table \ref{tabla_typeII} and  resonant condition
(\ref{newdes}).
\end{proposition}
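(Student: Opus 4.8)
The plan is to exhibit at least one explicit four-element abelian semigroup realizing the partially filled table (\ref{tabla_typeII}) together with the resonant splitting (\ref{newdes}); following the pattern already used for the type III algebra, I would in fact determine the entire family of such semigroups and point out that it is not unique.

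First I would use the resonance requirement to shrink the search space. The entries left blank in (\ref{tabla_typeII}) are exactly the four products $\lambda_1\lambda_1$, $\lambda_1\lambda_3$, $\lambda_2\lambda_2$ and $\lambda_3\lambda_3$. Imposing (\ref{res_con}) for the decomposition $S_0=\{\lambda_2,\lambda_4\}$, $S_1=\{\lambda_1,\lambda_3,\lambda_4\}$ forces all four of them into $S_0=\{\lambda_2,\lambda_4\}$: the three products $\lambda_1\lambda_1,\lambda_1\lambda_3,\lambda_3\lambda_3$ because they arise from $S_1\times S_1\subset S_0$, and $\lambda_2\lambda_2$ because of $S_0\times S_0\subset S_0$. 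The mixed products $S_0\times S_1$ are already fixed in the table and do land in $S_1$ (e.g.\ $\lambda_2\lambda_1=\lambda_3$, $\lambda_2\lambda_3=\lambda_4$), so resonance is consistent with the given entries. Thus each blank reduces to a binary choice between $\lambda_2$ and $\lambda_4$.

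Second I would cut this down further with associativity, using the two nontrivial fixed entries $\lambda_1\lambda_2=\lambda_3$ and $\lambda_2\lambda_3=\lambda_4$. Comparing the two bracketings of $\lambda_1\lambda_2\lambda_2$ gives $(\lambda_1\lambda_2)\lambda_2=\lambda_3\lambda_2=\lambda_4$ against $\lambda_1(\lambda_2\lambda_2)$, which would equal $\lambda_3$ if $\lambda_2\lambda_2=\lambda_2$; hence $\lambda_2\lambda_2=\lambda_4$. Feeding this into the bracketings of $\lambda_1\lambda_1\lambda_2$ forces $\lambda_1\lambda_3=\lambda_4$ (independently of the value of $\lambda_1\lambda_1$), and then the bracketings of $\lambda_3\lambda_1\lambda_2$ force $\lambda_3\lambda_3=\lambda_4$. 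The only product left genuinely free is $\lambda_1\lambda_1\in\{\lambda_2,\lambda_4\}$, so at most two candidate tables survive.

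Finally I would confirm that both survivors really are associative, commutative semigroups, which establishes existence. For $\lambda_1\lambda_1=\lambda_2$ the completed table is precisely $\lambda_\alpha\lambda_\beta=\lambda_{\min(\alpha+\beta,\,4)}$, whose associativity is immediate from that of truncated addition, and I would designate this semigroup $S_{N2}$. For $\lambda_1\lambda_1=\lambda_4$ the only nonzero product is $\lambda_1\lambda_2=\lambda_2\lambda_1=\lambda_3$, so every triple product collapses to the zero element $\lambda_4$ and associativity is automatic; commutativity is visible from the symmetry of the table in either case. The one point needing genuine care—rather than a real obstacle—is the associativity check of the second step, but the closed form $\lambda_\alpha\lambda_\beta=\lambda_{\min(\alpha+\beta,4)}$ makes that verification transparent instead of a laborious case-by-case computation over the multiplication table.
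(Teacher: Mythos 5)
Your proof is correct, and it is actually more complete than the one in the paper. The paper's own argument for this proposition is purely by exhibition: it writes down the single table $S_{N2}$ (which is exactly your $\lambda_\alpha\lambda_\beta=\lambda_{\min(\alpha+\beta,4)}$), asserts that associativity is a ``tedious but direct'' check, and defers the question of how many fillings exist to the computer search of section \ref{check}, which reports two non-isomorphic solutions $S_{II}^{1}$ and $S_{II}^{2}$. You instead run the classification by hand: resonance pins each of the four independent blanks $\lambda_1\lambda_1,\lambda_1\lambda_3,\lambda_2\lambda_2,\lambda_3\lambda_3$ to $\{\lambda_2,\lambda_4\}$, the associativity constraints from $\lambda_1\lambda_2\lambda_2$, $\lambda_1\lambda_1\lambda_2$ and $\lambda_3\lambda_1\lambda_2$ force $\lambda_2\lambda_2=\lambda_1\lambda_3=\lambda_3\lambda_3=\lambda_4$, and the surviving binary choice $\lambda_1\lambda_1\in\{\lambda_2,\lambda_4\}$ yields exactly the paper's $S_{II}^{2}=S_{N2}$ and $S_{II}^{1}$. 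Your closed-form argument for associativity of the $\min(\alpha+\beta,4)$ table, and the observation that in the other case every triple product collapses to the zero element, replace both the paper's hand-waved verification and its reliance on the program \textit{com.f}; what you lose relative to the paper is only the identification of the two tables with the catalogued semigroups $S_{(4)}^{12}$ and $S_{(4)}^{10}$, which is not needed for the existence claim. The only point worth flagging is that your associativity deductions implicitly use the resonance restriction (e.g.\ ruling out $\lambda_2\lambda_2=\lambda_2$ presupposes $\lambda_2\lambda_2\in\{\lambda_2,\lambda_4\}$), so your enumeration is of resonance-compatible semigroups rather than of all semigroups fitting the template; since the proposition asks precisely for tables satisfying both conditions, this is harmless.
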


After looking for different possibilities we have found one way to fill the
table (\ref{tabla_typeII}). The proposed semigroup is%
\begin{equation}%
\begin{tabular}
[c]{l|llll}
& $\lambda_{1}$ & $\lambda_{2}$ & $\lambda_{3}$ & $\lambda_{4}$\\\hline
$\lambda_{1}$ & $\lambda_{2}$ & $\lambda_{3}$ & $\lambda_{4}$ & $\lambda_{4}%
$\\
$\lambda_{2}$ & $\lambda_{3}$ & $\lambda_{4}$ & $\lambda_{4}$ & $\lambda_{4}%
$\\
$\lambda_{3}$ & $\lambda_{4}$ & $\lambda_{4}$ & $\lambda_{4}$ & $\lambda_{4}%
$\\
$\lambda_{4}$ & $\lambda_{4}$ & $\lambda_{4}$ & $\lambda_{4}$ & $\lambda_{4}$%
\end{tabular}
\ \label{SN2}%
\end{equation}

That this multiplication table represents in fact an abelian semigroup can be checked
directly. That it is also commutative is seen  from the table. The associativity is proved by
a tedious but direct calculation.

Note that there may be other semigroups that can also lead to the type II
algebra. Those  correspond to other ways to fill the empty spaces
in table (\ref{tabla_typeII}).

\subsubsection{Type V and the $S_{N3}$ semigroup}

\begin{theorem}
An abelian semigroup of four elements $S=\left\{ \lambda_{1},\lambda_{2},\lambda_{3},\lambda_{4}\right\} $ 
satisfying the conditions of Lemma 5 and possessing the following constraints in the multiplication table
\begin{equation}%
\begin{tabular}
[c]{l|llll}
& $\lambda_{1}$ & $\lambda_{2}$ & $\lambda_{3}$ & $\lambda_{4}$\\\hline
$\lambda_{1}$ &  & $\lambda_{1}$ &  & $\lambda_{4}$\\
$\lambda_{2}$ & $\lambda_{1}$ &  & $\lambda_{3}$ & $\lambda_{4}$\\
$\lambda_{3}$ &  & $\lambda_{3}$ &  & $\lambda_{4}$\\
$\lambda_{4}$ & $\lambda_{4}$ & $\lambda_{4}$ & $\lambda_{4}$ & $\lambda_{4}$%
\end{tabular}
\label{tabla_typeV}%
\end{equation}
produces, after $0_S$-reduction of the resonant subalgebra of the $S$-expanded
algebra of the $2$-dimensional isometry (\ref{2dimalg_dos}), an algebra which
coincides with $3$-dimensional Bianchi type V isometry.
\end{theorem}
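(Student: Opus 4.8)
The plan is to follow exactly the template used for the Type II case in the preceding theorem, since both start from the same reduced resonant subalgebra established in Lemma 5, namely $\mathcal{G}_{S,R}^{\text{red}}=\left\{  \lambda_{2}X_{2},\ \lambda_{1}X_{1},\ \lambda_{3}X_{1}\right\}$ with the commutation relations (\ref{seg}). The only freedom left is the choice of the two products $\lambda_{1}\lambda_{2}$ and $\lambda_{2}\lambda_{3}$ appearing on the right-hand side of (\ref{seg}), and it is precisely this choice that selects which three-dimensional Bianchi algebra one obtains.

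First I would read off from the multiplication table (\ref{tabla_typeV}) the two relevant entries, $\lambda_{1}\lambda_{2}=\lambda_{1}$ and $\lambda_{2}\lambda_{3}=\lambda_{3}$, and substitute them into (\ref{seg}). This yields
\begin{align*}
\left[  \lambda_{2}X_{2},\lambda_{1}X_{1}\right]   &  =-\lambda_{1}X_{1},\\
\left[  \lambda_{2}X_{2},\lambda_{3}X_{1}\right]   &  =-\lambda_{3}X_{1},\\
\left[  \lambda_{1}X_{1},\lambda_{3}X_{1}\right]   &  =0.
\end{align*}
Next I would relabel the generators as $Y_{1}=\lambda_{1}X_{1}$, $Y_{2}=\lambda_{3}X_{1}$, $Y_{3}=\lambda_{2}X_{2}$, which immediately produces
\[
\left[  Y_{1},Y_{2}\right]  =0,\qquad\left[  Y_{1},Y_{3}\right]  =Y_{1}
,\qquad\left[  Y_{2},Y_{3}\right]  =Y_{2},
\]
the defining relations of the Bianchi type V algebra in table (\ref{Bianchi}).

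The conceptual point distinguishing this from the Type II construction is that here neither of the two nonzero products collapses to the zero element $\lambda_{4}$; instead both $X_{1}$-type generators are rescaled by the $X_{2}$-type generator, which reproduces the symmetric structure of type V in which $X_{3}$ rescales both $X_{1}$ and $X_{2}$. I expect the algebraic verification above to be entirely routine. The genuine obstacle, exactly as in the Type II case, is the existence question: one must confirm that the partially specified table (\ref{tabla_typeV}) --- together with the zero element and resonant decomposition (\ref{newdes}) inherited from Lemma 5 --- can actually be completed to a commutative, associative semigroup satisfying the resonant condition (\ref{res_con}). This is not settled by the commutator computation itself and is instead the content of the Proposition that follows, which exhibits the explicit semigroup $S_{N3}$.
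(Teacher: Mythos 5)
Your proposal is correct and follows essentially the same route as the paper: read off $\lambda_{1}\lambda_{2}=\lambda_{1}$ and $\lambda_{2}\lambda_{3}=\lambda_{3}$ from the table, substitute into the commutators (\ref{seg}) of the reduced resonant subalgebra from Lemma 5, and relabel $Y_{1}=\lambda_{1}X_{1}$, $Y_{2}=\lambda_{3}X_{1}$, $Y_{3}=\lambda_{2}X_{2}$ to recognize type V, with the existence of a completion of the table deferred to the subsequent Proposition exhibiting $S_{N3}$. Your closing remark correctly identifies that the nontrivial content lies in that existence question, exactly as the paper structures it.
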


\begin{proof}
In fact, if we choose in the commutation relations (\ref{seg}), for example,
\begin{equation}
\lambda_{1}\lambda_{2}=\lambda_{1}\text{ and }\lambda_{2}\lambda_{3}%
=\lambda_{3} \label{five_cond}%
\end{equation}
In that case the commutation relations (\ref{seg})  take the form%
\begin{align*}
\left[  \lambda_{2}X_{2},\lambda_{1}X_{1}\right]   &  =-\lambda_{1}X_{1}\\
\left[  \lambda_{2}X_{2},\lambda_{3}X_{1}\right]   &  =-\lambda_{3}X_{1}\\
\left[  \lambda_{1}X_{1},\lambda_{3}X_{1}\right]   &  =0
\end{align*}
and renaming the generators as
\begin{align*}
Y_{1}  &  =\lambda_{1}X_{1}\\
Y_{2}  &  =\lambda_{3}X_{1}\\
Y_{3}  &  =\lambda_{2}X_{2}%
\end{align*}
we obtain the type V algebra%
\[
\left[  Y_{1},Y_{2}\right]  =0,\ \ \ \left[  Y_{1},Y_{3}\right]
=Y_{1},\ \ \ \left[  Y_{2},Y_{3}\right]  =Y_{2}%
\]

But again, in order for this result to be true we must provide of an
explicit semigroup that satisfies the conditions (\ref{five_cond}). Until now
our table has had the form%
\[%
\begin{tabular}
[c]{l|llll}
& $\lambda_{1}$ & $\lambda_{2}$ & $\lambda_{3}$ & $\lambda_{4}$\\\hline
$\lambda_{1}$ &  & $\lambda_{1}$ &  & $\lambda_{4}$\\
$\lambda_{2}$ & $\lambda_{1}$ &  & $\lambda_{3}$ & $\lambda_{4}$\\
$\lambda_{3}$ &  & $\lambda_{3}$ &  & $\lambda_{4}$\\
$\lambda_{4}$ & $\lambda_{4}$ & $\lambda_{4}$ & $\lambda_{4}$ & $\lambda_{4}$%
\end{tabular}
\]
and the empty spaces must be filled in a way that respects the required
conditions. Note that there are $4^{4}=256$ possibilities to fill this table in
a closed form. This number is reduced by imposing 
associativity, commutativity and the resonant condition for the decomposition
(\ref{newdes}). This number can be even reduced by the 
associativity condition.
\end{proof}

\begin{proposition}
There are semigroups that fit multiplication table \ref{tabla_typeV} and resonant condition 
(\ref{newdes}).
\end{proposition}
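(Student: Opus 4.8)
The plan is to prove the claim by explicit construction: I will exhibit one abelian semigroup whose multiplication table agrees with the prescribed entries of (\ref{tabla_typeV}) and for which the decomposition (\ref{newdes}) obeys the resonance condition (\ref{res_con}). Since the statement asserts only existence, a single example suffices, exactly as in the type II case.

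First I would identify the free entries. Comparing (\ref{tabla_typeV}) with a full $4\times4$ table and using commutativity, the only unspecified products are $\lambda_{1}\lambda_{1}$, $\lambda_{1}\lambda_{3}$, $\lambda_{2}\lambda_{2}$ and $\lambda_{3}\lambda_{3}$. The resonance condition (\ref{res_con}) applied to (\ref{newdes}) immediately restricts them: because $\lambda_{1},\lambda_{3}\in S_{1}$ we need $\lambda_{1}\lambda_{1},\lambda_{1}\lambda_{3},\lambda_{3}\lambda_{3}\in S_{1}\times S_{1}\subset S_{0}=\left\{\lambda_{2},\lambda_{4}\right\}$, and because $\lambda_{2}\in S_{0}$ we need $\lambda_{2}\lambda_{2}\in S_{0}\times S_{0}\subset S_{0}=\left\{\lambda_{2},\lambda_{4}\right\}$. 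Thus each free entry is either $\lambda_{2}$ or $\lambda_{4}$, leaving only finitely many candidate tables, and any such choice automatically satisfies the resonance condition.

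Next I would single out the candidate that makes associativity transparent: set $\lambda_{2}\lambda_{2}=\lambda_{2}$ and $\lambda_{1}\lambda_{1}=\lambda_{1}\lambda_{3}=\lambda_{3}\lambda_{3}=\lambda_{4}$. With this choice $\lambda_{2}$ fixes every element ($\lambda_{2}\lambda_{1}=\lambda_{1}$, $\lambda_{2}\lambda_{3}=\lambda_{3}$, $\lambda_{2}\lambda_{4}=\lambda_{4}$, $\lambda_{2}\lambda_{2}=\lambda_{2}$), so $\lambda_{2}$ is a two-sided identity, while $\lambda_{4}$ is the zero and the remaining elements $\left\{\lambda_{1},\lambda_{3},\lambda_{4}\right\}$ form a null subsemigroup in which every product equals $\lambda_{4}$. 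I would then argue associativity structurally rather than by exhausting all triples: a null semigroup is associative, and adjoining an identity element to any associative operation preserves associativity, so the whole table is associative. Commutativity is read directly from the symmetry of the table, and closure is evident.

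The only real obstacle is reconciling associativity with the prescribed entries $\lambda_{1}\lambda_{2}=\lambda_{1}$, $\lambda_{2}\lambda_{3}=\lambda_{3}$ and resonance at the same time; the key observation that dissolves this obstacle is that these two prescribed relations say exactly that $\lambda_{2}$ acts as an identity on $\lambda_{1}$ and $\lambda_{3}$, so completing $\lambda_{2}$ to a genuine monoid identity via $\lambda_{2}\lambda_{2}=\lambda_{2}$ and sending all products among $\left\{\lambda_{1},\lambda_{3}\right\}$ to the zero $\lambda_{4}$ yields the identity-plus-null-semigroup structure for which associativity is automatic. This produces the semigroup $S_{N3}$ and completes the proof. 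As in the type II discussion, I would finally remark that other admissible fillings of (\ref{tabla_typeV}) may give further non-isomorphic semigroups leading to the same type V algebra.
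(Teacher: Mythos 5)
Your construction is correct and yields exactly the semigroup $S_{N3}$ that the paper exhibits, so this is essentially the same proof by explicit example; your only addition is the structural justification of associativity (a null semigroup on $\left\{\lambda_{1},\lambda_{3},\lambda_{4}\right\}$ with $\lambda_{2}$ adjoined as an identity), which is cleaner than the paper's ``direct but tedious calculation.'' Note only that the paper's later computer check shows this filling is in fact the \emph{unique} admissible one, so your closing caveat about possible further non-isomorphic fillings of (\ref{tabla_typeV}) turns out to be vacuous.
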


After studying different possibilities we found one way to fill the table
(\ref{tabla_typeV}). The proposed semigroup is%
\[%
\begin{tabular}
[c]{l|llll}
& $\lambda_{1}$ & $\lambda_{2}$ & $\lambda_{3}$ & $\lambda_{4}$\\\hline
$\lambda_{1}$ & $\lambda_{4}$ & $\lambda_{1}$ & $\lambda_{4}$ & $\lambda_{4}%
$\\
$\lambda_{2}$ & $\lambda_{1}$ & $\lambda_{2}$ & $\lambda_{3}$ & $\lambda_{4}%
$\\
$\lambda_{3}$ & $\lambda_{4}$ & $\lambda_{3}$ & $\lambda_{4}$ & $\lambda_{4}%
$\\
$\lambda_{4}$ & $\lambda_{4}$ & $\lambda_{4}$ & $\lambda_{4}$ & $\lambda_{4}$%
\end{tabular}
\
\]

That this multiplication represents an abelian semigroup can be
checked again by direct calculation.

We point out again that there may be other semigroups that can also lead to
the type V algebra. Those  correspond to other ways to fill the
empty spaces in table (\ref{tabla_typeV}).

\subsection{\textbf{The type I algebra}}

\label{typeI}

Starting from the abelian $2$-dimensional algebra%
\begin{equation}
\left[  X_{1},X_{2}\right]  =0 \label{I1}%
\end{equation}
we note that it also possesses the subspace structure (\ref{subs_structure})
where $V_{0}=\left\{  X_{2}\right\}  $ and $V_{1}=\left\{  X_{1}\right\}  $.
So, for example, by choosing the semigroup $S_{K}^{\left(  3\right)  }$ or
$S_{N1}$ both having a resonant decomposition of the form%
\begin{align}
S_{0}  &  =\left\{  \lambda_{2},\lambda_{3},\lambda_{4}\right\} \label{I2}\\
S_{1}  &  =\left\{  \lambda_{1},\lambda_{4}\right\} \nonumber
\end{align}
we obtain that the reduction of the resonant subalgebra
\begin{equation}
\mathcal{G}_{S,R}^{\text{red}}=\left\{  \lambda_{2}X_{2},\ \lambda_{3}%
X_{2},\ \lambda_{1}X_{1}\right\}  \label{I3}%
\end{equation}
will have the following commutation relations:%
\begin{align}
\left[  \lambda_{2}X_{2},\lambda_{3}X_{2}\right]   &  =\lambda_{2}\lambda
_{3}\left[  X_{2},X_{2}\right]  =0\label{I4}\\
\left[  \lambda_{2}X_{2},\lambda_{1}X_{1}\right]   &  =\lambda_{1}\lambda
_{2}\left[  X_{2},X_{1}\right]  =0\nonumber\\
\left[  \lambda_{3}X_{2},\lambda_{1}X_{1}\right]   &  =\lambda_{1}\lambda
_{3}\left[  X_{2},X_{1}\right]  =0\nonumber
\end{align}
what means that it doesn't matter if we use the semigroup $S_{K}^{\left(
3\right)  }$ or $S_{N1}$, the result will be always an abelian algebra in $3$
dimensions because the original algebra is abelian. The same result can be
reached with the semigroup $S_{E}^{\left(  2\right)  }$ whose semigroup
decomposition is similar to (\ref{I2}).

Also, by using the semigroups $S_{N2}$, $S_{N3}$ and probably others that have a
resonant decomposition of the form%
\begin{align}
S_{0}  &  =\left\{  \lambda_{2},\lambda_{4}\right\} \label{I5}\\
S_{1}  &  =\left\{  \lambda_{1},\lambda_{3},\lambda_{4}\right\} \nonumber
\end{align}
we obtain to a reduction of the resonant subalgebra
\[
\mathcal{G}_{S,R}^{\text{red}}=\left\{  \lambda_{2}X_{2},\ \lambda_{1}%
X_{1},\ \lambda_{3}X_{1}\right\}
\]
whose commutation relations%
\begin{align*}
\left[  \lambda_{2}X_{2},\lambda_{1}X_{1}\right]   &  =\lambda_{1}\lambda
_{2}\left[  X_{2},X_{1}\right]  =0\\
\left[  \lambda_{2}X_{2},\lambda_{3}X_{1}\right]   &  =\lambda_{2}\lambda
_{3}\left[  X_{2},X_{1}\right]  =0\\
\left[  \lambda_{1}X_{1},\lambda_{3}X_{1}\right]   &  =\lambda_{1}\lambda
_{3}\left[  X_{1},X_{1}\right]  =0
\end{align*}
are again no more than the $3$-dimensional abelian algebra.

So we conclude that starting from (\ref{I1}) whatever semigroup with a
zero element and that have a resonant decomposition of the form
(\ref{I2}) or (\ref{I5}) will lead to the type I algebra. Moreover, this
result can be generalized:

\begin{proposition}
An abelian algebra in $d$ dimensions can be obtained as an expansion of the
abelian algebra in $2$-dimensions by using a semigroup with probably a
 zero element and a suitable resonant decomposition.
\end{proposition}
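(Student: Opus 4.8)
The plan is to reduce the claim to a pure dimension count and then exhibit, for every target dimension $d$, one explicit semigroup that does the job. The crucial simplification is that the starting algebra (\ref{I1}) is abelian: every structure constant of $\mathcal{G}$ vanishes, so by the Lie product (\ref{s2'}) every bracket $[X_{(i,\alpha)},X_{(j,\beta)}]=\lambda_{\alpha}\lambda_{\beta}[X_i,X_j]$ of the $S$-expanded algebra is already zero. Hence the full expansion $S\times\mathcal{G}$, its resonant subalgebra (\ref{re_algebra}), and the $0_S$-reduction of the latter are \emph{automatically} abelian, whatever abelian semigroup $S$ and whatever resonant decomposition we choose. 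Consequently the only thing left to control is the \emph{dimension} of the resulting abelian algebra: with $V_0=\{X_2\}$ and $V_1=\{X_1\}$ both one-dimensional, the reduced resonant subalgebra has dimension $|S_0\setminus\{0_S\}|+|S_1\setminus\{0_S\}|$, i.e. the number of non-zero semigroup elements appearing in $S_0$ plus those appearing in $S_1$.

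To obtain an arbitrary $d$ I would use the family $S_E^{(N)}$ already employed in the excerpt (and introduced in \cite{irs}), taking $N=d-1$. Its elements are $\lambda_0,\lambda_1,\dots,\lambda_{N+1}$ with product $\lambda_{\alpha}\lambda_{\beta}=\lambda_{\min(\alpha+\beta,\,N+1)}$; this law is manifestly commutative, associative (both sides collapse to $\lambda_{\min(\alpha+\beta+\gamma,\,N+1)}$), and has $\lambda_{N+1}$ as a zero element. I would then split it by parity of the index,
\[
S_0=\{\lambda_{\alpha}:\alpha\ \mathrm{even},\,0\le\alpha\le N\}\cup\{\lambda_{N+1}\},\qquad
S_1=\{\lambda_{\alpha}:\alpha\ \mathrm{odd},\,1\le\alpha\le N\}\cup\{\lambda_{N+1}\}.
\]
Checking the resonance condition (\ref{res_con}) is then immediate: when $\alpha+\beta\le N+1$ the product index $\alpha+\beta$ carries the sum of the parities, so $\mathrm{even}\cdot\mathrm{even}$ and $\mathrm{odd}\cdot\mathrm{odd}$ land in $S_0$ while $\mathrm{even}\cdot\mathrm{odd}$ lands in $S_1$; and when $\alpha+\beta>N+1$ the product is the zero $\lambda_{N+1}$, which I have placed in both $S_0$ and $S_1$, so it never violates any inclusion.

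Finally I would count dimensions: the non-zero elements of $S_0$ are the even-indexed $\lambda_{\alpha}$ with $0\le\alpha\le N$ and those of $S_1$ the odd-indexed ones in the same range, so together they exhaust $\{\lambda_0,\dots,\lambda_N\}$ and number exactly $N+1=d$. By the count above the $0_S$-reduced resonant subalgebra is therefore $d$-dimensional and abelian, which is precisely the $d$-dimensional abelian algebra. There is no genuine obstacle once abelianness has trivialized the bracket structure; the only point that needs care is the bookkeeping --- verifying that the zero element sits in both $S_0$ and $S_1$ and that the reduction removes exactly one generator from each block --- together with confirming that the small cases ($d=2$ recovering (\ref{I1}) itself, and $d=3$ matching the $S_E^{(2)}$ computation of section \ref{typeI}) come out correctly.
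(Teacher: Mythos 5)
Your proposal is correct, and its core is the same observation the paper relies on: because the starting algebra $[X_1,X_2]=0$ has vanishing structure constants, every bracket of the $S$-expanded algebra vanishes identically, so the resonant subalgebra and its $0_S$-reduction are abelian for any admissible $S$, and the only thing to control is the dimension count $|S_0\setminus\{0_S\}|+|S_1\setminus\{0_S\}|$. Where you go beyond the paper is in actually producing, for every $d$, a concrete semigroup realizing that count: the paper only works out $d=3$ explicitly (with $S_K^{(3)}$, $S_{N1}$, $S_E^{(2)}$, $S_{N2}$, $S_{N3}$) and then asserts the generalization, whereas you take $S_E^{(d-1)}$ with the parity decomposition of the indices, verify resonance, and count exactly $d$ surviving generators; for $d=3$ your decomposition specializes precisely to the paper's $S_E^{(2)}$ computation, which is a good consistency check. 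The only point to phrase a bit more carefully is the boundary case $\alpha+\beta=N+1$ in your resonance verification: there the product is the zero element $\lambda_{N+1}$ irrespective of the parity of $\alpha+\beta$, which is harmless only because $\lambda_{N+1}$ sits in both $S_0$ and $S_1$ --- you make this remark for $\alpha+\beta>N+1$, and the same remark should be understood to cover equality as well.
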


Note that a crucial property to relate a $3$-dimensional algebra (whichever of
the type I, II, III and V) with a 2-dimensional algebra is the existence of the
resonant subalgebra and the $0_S$-reduction. This is the only way to obtain
three generators starting from two.

\subsection{\textbf{Brief summary}}

\label{Bsum}

Starting from
\begin{equation}
\left[  X_{1},X_{2}\right]  =0 \label{sum1}%
\end{equation}
it is possible to obtain the type I abelian algebra in three dimensions using
many semigroups as for example $S_{E}^{\left(  2\right)  }$, $S_{K}^{\left(
3\right)  }$, $S_{N1}$, $S_{N2}$, $S_{N3}$ and probably others. Now starting
from%
\begin{equation}
\left[  X_{1},X_{2}\right]  =X_{1} \label{sum2}%
\end{equation}
it is also possible to obtain the type I abelian algebra in three dimensions
using for example a semigroup whose multiplication satisfies the condition
$(i)$ of section \ref{BtypeIII}, i.e., whose table has the form
\[%
\begin{tabular}
[c]{l|llll}
& $\lambda_{1}$ & $\lambda_{2}$ & $\lambda_{3}$ & $\lambda_{4}$\\\hline
$\lambda_{1}$ &  & $\lambda_{4}$ & $\lambda_{4}$ & $\lambda_{4}$\\
$\lambda_{2}$ & $\lambda_{4}$ &  &  & $\lambda_{4}$\\
$\lambda_{3}$ & $\lambda_{4}$ &  &  & $\lambda_{4}$\\
$\lambda_{4}$ & $\lambda_{4}$ & $\lambda_{4}$ & $\lambda_{4}$ & $\lambda_{4}$%
\end{tabular}
\ \
\]
where the empty spaces must be filled with the corresponding conditions of
associativity, resonant condition, reduction condition, etc.

The semigroups with which it is possible to generate the type I, II, III and V algebra starting
from the $2$-dimensional algebra (\ref{sum2}) appear in the following table:%
\begin{equation}%
\begin{tabular}
[c]{|l|l|}\hline
\textbf{Algebra} & \textbf{Semigroup used}\\\hline
Type I &
\begin{tabular}
[c]{l}%
whatever semigroup with $0$-element\\
and a resonant decomposition
\end{tabular}
\\\hline
Type II & $S_{N2}$ and probably others\\\hline
Type III & $S_{E}^{\left(  2\right)  }$, $S_{K}^{\left(  3\right)  }$,
$S_{N1}$ and probably others\\\hline
Type V & $S_{N3}$ and probably others\\\hline
\end{tabular}
\ \ \ \ \ \ \ \label{Tab2}%
\end{equation}
where the mentioned semigroups are described in the following table%
\begin{equation}%
\begin{tabular}
[c]{|l|l|l|l|}\hline
Semigroup & Table of multiplicarion & Resonant decomposition & $0_{S}%
$-element\\\hline
\multicolumn{1}{|c|}{$S_{E}^{\left(  2\right)  }$} & \multicolumn{1}{|c|}{$%
\begin{tabular}
[c]{l|llll}
& $\lambda_{0}$ & $\lambda_{1}$ & $\lambda_{2}$ & $\lambda_{3}$\\\hline
$\lambda_{0}$ & $\lambda_{0}$ & $\lambda_{1}$ & $\lambda_{2}$ & $\lambda_{3}%
$\\
$\lambda_{1}$ & $\lambda_{1}$ & $\lambda_{2}$ & $\lambda_{3}$ & $\lambda_{3}%
$\\
$\lambda_{2}$ & $\lambda_{2}$ & $\lambda_{3}$ & $\lambda_{3}$ & $\lambda_{3}%
$\\
$\lambda_{3}$ & $\lambda_{3}$ & $\lambda_{3}$ & $\lambda_{3}$ & $\lambda_{3}$%
\end{tabular}
\ \ \ \ \ \ \ $} & \multicolumn{1}{|c|}{%
\begin{tabular}
[c]{l}%
$S_{0}=\left\{  \lambda_{0},\lambda_{2},\lambda_{3}\right\}  $\\
$S_{1}=\left\{  \lambda_{1},\lambda_{3}\right\}  $%
\end{tabular}
} & \multicolumn{1}{|c|}{$\lambda_{3}$}\\\hline
\multicolumn{1}{|c|}{$S_{K}^{\left(  3\right)  }$} & \multicolumn{1}{|c|}{$%
\begin{tabular}
[c]{l|llll}
& $\lambda_{1}$ & $\lambda_{2}$ & $\lambda_{3}$ & $\lambda_{4}$\\\hline
$\lambda_{1}$ & $\lambda_{4}$ & $\lambda_{4}$ & $\lambda_{1}$ & $\lambda_{4}%
$\\
$\lambda_{2}$ & $\lambda_{4}$ & $\lambda_{2}$ & $\lambda_{2}$ & $\lambda_{4}%
$\\
$\lambda_{3}$ & $\lambda_{1}$ & $\lambda_{2}$ & $\lambda_{3}$ & $\lambda_{4}%
$\\
$\lambda_{4}$ & $\lambda_{4}$ & $\lambda_{4}$ & $\lambda_{4}$ & $\lambda_{4}$%
\end{tabular}
\ \ \ \ \ \ \ $} & \multicolumn{1}{|c|}{%
\begin{tabular}
[c]{l}%
$S_{0}=\left\{  \lambda_{2},\lambda_{3},\lambda_{4}\right\}  $\\
$S_{1}=\left\{  \lambda_{1},\lambda_{4}\right\}  $%
\end{tabular}
} & \multicolumn{1}{|c|}{$\lambda_{4}$}\\\hline
\multicolumn{1}{|c|}{$S_{N1}$} & \multicolumn{1}{|c|}{$%
\begin{tabular}
[c]{l|llll}
& $\lambda_{1}$ & $\lambda_{2}$ & $\lambda_{3}$ & $\lambda_{4}$\\\hline
$\lambda_{1}$ & $\lambda_{4}$ & $\lambda_{4}$ & $\lambda_{1}$ & $\lambda_{4}%
$\\
$\lambda_{2}$ & $\lambda_{4}$ & $\lambda_{2}$ & $\lambda_{4}$ & $\lambda_{4}%
$\\
$\lambda_{3}$ & $\lambda_{1}$ & $\lambda_{4}$ & $\lambda_{3}$ & $\lambda_{4}%
$\\
$\lambda_{4}$ & $\lambda_{4}$ & $\lambda_{4}$ & $\lambda_{4}$ & $\lambda_{4}$%
\end{tabular}
\ \ \ \ \ $ $\ $} & \multicolumn{1}{|c|}{%
\begin{tabular}
[c]{l}%
$S_{0}=\left\{  \lambda_{2},\lambda_{3},\lambda_{4}\right\}  $\\
$S_{1}=\left\{  \lambda_{1},\lambda_{4}\right\}  $%
\end{tabular}
} & \multicolumn{1}{|c|}{$\lambda_{4}$}\\\hline
\multicolumn{1}{|c|}{$S_{N2}$} & \multicolumn{1}{|c|}{$%
\begin{tabular}
[c]{l|llll}
& $\lambda_{1}$ & $\lambda_{2}$ & $\lambda_{3}$ & $\lambda_{4}$\\\hline
$\lambda_{1}$ & $\lambda_{2}$ & $\lambda_{3}$ & $\lambda_{4}$ & $\lambda_{4}%
$\\
$\lambda_{2}$ & $\lambda_{3}$ & $\lambda_{4}$ & $\lambda_{4}$ & $\lambda_{4}%
$\\
$\lambda_{3}$ & $\lambda_{4}$ & $\lambda_{4}$ & $\lambda_{4}$ & $\lambda_{4}%
$\\
$\lambda_{4}$ & $\lambda_{4}$ & $\lambda_{4}$ & $\lambda_{4}$ & $\lambda_{4}$%
\end{tabular}
\ \ \ \ \ \ \ $} & \multicolumn{1}{|c|}{%
\begin{tabular}
[c]{l}%
$S_{0}=\left\{  \lambda_{2},\lambda_{4}\right\}  $\\
$S_{1}=\left\{  \lambda_{1},\lambda_{3},\lambda_{4}\right\}  $%
\end{tabular}
} & \multicolumn{1}{|c|}{$\lambda_{4}$}\\\hline
\multicolumn{1}{|c|}{$S_{N3}$} & \multicolumn{1}{|c|}{$%
\begin{tabular}
[c]{l|llll}
& $\lambda_{1}$ & $\lambda_{2}$ & $\lambda_{3}$ & $\lambda_{4}$\\\hline
$\lambda_{1}$ & $\lambda_{4}$ & $\lambda_{1}$ & $\lambda_{4}$ & $\lambda_{4}%
$\\
$\lambda_{2}$ & $\lambda_{1}$ & $\lambda_{2}$ & $\lambda_{3}$ & $\lambda_{4}%
$\\
$\lambda_{3}$ & $\lambda_{4}$ & $\lambda_{3}$ & $\lambda_{4}$ & $\lambda_{4}%
$\\
$\lambda_{4}$ & $\lambda_{4}$ & $\lambda_{4}$ & $\lambda_{4}$ & $\lambda_{4}$%
\end{tabular}
\ \ \ \ \ \ \ $} & \multicolumn{1}{|c|}{%
\begin{tabular}
[c]{l}%
$S_{0}=\left\{  \lambda_{2},\lambda_{4}\right\}  $\\
$S_{1}=\left\{  \lambda_{1},\lambda_{3},\lambda_{4}\right\}  $%
\end{tabular}
} & \multicolumn{1}{|c|}{$\lambda_{4}$}\\\hline
\end{tabular}
\ \ \ \ \ \ \ \label{Tab3}%
\end{equation}

\section{\textbf{The Bianchi spaces not-related with $2$-dimensional
isometries}}

\label{Bianchi_no_related}

\subsection{\textbf{Type IV, VI, VII$_{2}$, VIII and IX algebras}}

\label{typeIV}

Let's consider for example the type IV algebra%
\begin{align}
\left[  Y_{1},Y_{2}\right]   &  =0,\label{not1}\\
\left[  Y_{1},Y_{3}\right]   &  =Y_{1},\label{not2}\\
\left[  Y_{2},Y_{3}\right]   &  =Y_{1}+Y_{2}. \label{not3}%
\end{align}
As the $S$-expansion method uses an induced bracket
\[
\left[  \lambda_{\alpha}X_{i},\lambda_{\beta}X_{j}\right]  =\lambda_{\alpha
}\lambda_{\beta}\left[  X_{i},X_{j}\right]  =\lambda_{\gamma\left(
\alpha,\beta\right)  }\left[  X_{i},X_{j}\right]
\]
for the expanded algebra and considering that for our original algebra
$i,j=1,2$ and its commutation relation is given by%
\[
\left[  X_{1},X_{2}\right]  =X_{1},%
\]
we have that the first two relations (\ref{not1},\ref{not2}) can easily be
reproduced with some semigroup product, but to reproduce (\ref{not3}) we need
a non-zero result as the first requirement. This means
that we must have a relation like%
\[
\left[  \lambda_{\alpha}X_{1},\lambda_{\beta}X_{2}\right]  =\lambda_{\alpha
}\lambda_{\beta}\left[  X_{1},X_{2}\right]  =\lambda_{\gamma\left(
\alpha,\beta\right)  }X_{1}.%
\]
And here we can see that no matter which semigroup we choose, $\lambda
_{\gamma\left(  \alpha,\beta\right)  }$ will always be an element of the
semigroup (it is closed) and therefore  we will never be able to
reproduce a sum of two generators.

Now consider the type VI algebra.
\begin{align*}
\left[  Y_{1},Y_{2}\right]   &  =0,\\
\left[  Y_{1},Y_{3}\right]   &  =Y_{1},\\
\left[  Y_{2},Y_{3}\right]   &  =hY_{2},\ \ \ h\neq0,1.
\end{align*}
Again the first two brackets could be reproduced by a certain semigroup, but
for the third one we would have something like%
\[
\left[  \lambda_{\alpha}X_{1},\lambda_{\beta}X_{2}\right]  =\lambda
_{\gamma\left(  \alpha,\beta\right)  }X_{1}%
\]
and again, no matter which semigroup we choose, $\lambda_{\gamma\left(
\alpha,\beta\right)  }$ will always be an element of the semigroup and
 we will never be able to reproduce semigroup element multiplied
by a numeric factor.  A similar argument can be used to show that  type
VII$_{1}$ algebra cannot be obtained by the $S$-expansion procedure.

A mix of the above arguments can be applied to explain why it is impossible to
obtain the type VII$_{2}$ and VIII algebras as an expansion of a
$2$-dimensional isometry.

Finally, to show why it is also impossible to reproduce the type IX algebra
\[
\left[  Y_{1},Y_{2}\right]  =Y_{3},\ \ \ \left[  Y_{2},Y_{3}\right]
=Y_{1},\ \ \ \left[  Y_{3},Y_{1}\right]  =Y_{2}%
\]
we have to realize that the candidate for being the expanded algebra will have three
commutation relations of the form%
\[
\left[  \lambda_{\alpha}X_{i},\lambda_{\beta}X_{j}\right]  =\lambda
_{\gamma\left(  \alpha,\beta\right)  }\left[  X_{i},X_{j}\right]
\]
but where $i,j$ takes the values $1$ and $2$. Therefore, in one of the three commutation relations one
index will always be repeated leading to a vanishing bracket. So it is impossible to generate, 
by means of an $S$-expansion, a 3-dimensional algebra with the three brackets having a non-
zero value.

Thus, we conclude that these types of algebra, that cannot be obtained by an
expansion of the 2-dimensional isometries, are in some sense intrinsic in $3$ dimensions.

\section{\textbf{Checking with computer programs}}

\label{check}

A common question when working with semigroups in section
\ref{Bianchi_related} is that of the existence of diverse semigroups given a
multiplication table with some elements already chosen. We have, for example,
a table like this one%
\[%
\begin{tabular}
[c]{l|llll}%
$S^{?}$ & $\lambda_{1}$ & $\lambda_{2}$ & $\lambda_{3}$ & $\lambda_{4}%
$\\\hline
$\lambda_{1}$ &  & $\lambda_{4}$ & $\lambda_{1}$ & $\lambda_{4}$\\
$\lambda_{2}$ & $\lambda_{4}$ &  &  & $\lambda_{4}$\\
$\lambda_{3}$ & $\lambda_{1}$ &  &  & $\lambda_{4}$\\
$\lambda_{4}$ & $\lambda_{4}$ & $\lambda_{4}$ & $\lambda_{4}$ & $\lambda_{4}$%
\end{tabular}
\]

In principle there are 256 different symmetric matrices which fill this
template, but not all of them will be semigroups because the multiplication
table will not always be associative. Moreover, we have to select only those
that satisfy a certain \textit{resonant condition}. Finally, many of these
associative tables will be isomorphic, so we only have to select those that
are not.

In what follows we find all the non isomorphic forms to fill the tables
(\ref{case3}), (\ref{tabla_typeII}) and (\ref{tabla_typeV}) with the mentioned
conditions and show that all the semigroups given in table (\ref{Tab2}) (those
semigroups that we have constructed by hand) are isomorphic to one of the
semigroups given by the computer program \textit{com.f} of \cite{Plemmons}.

\subsection{\textbf{Type II}}

The template is:
\[%
\begin{tabular}
[c]{l|llll}%
$S^{?}$ & $\lambda_{1}$ & $\lambda_{2}$ & $\lambda_{3}$ & $\lambda_{4}%
$\\\hline
$\lambda_{1}$ &  & $\lambda_{3}$ &  & $\lambda_{4}$\\
$\lambda_{2}$ & $\lambda_{3}$ &  & $\lambda_{4}$ & $\lambda_{4}$\\
$\lambda_{3}$ &  & $\lambda_{4}$ &  & $\lambda_{4}$\\
$\lambda_{4}$ & $\lambda_{4}$ & $\lambda_{4}$ & $\lambda_{4}$ & $\lambda_{4}$%
\end{tabular}
\]

By using computer programs, we have found that there are two non isomorphic
ways of filling this template such that: $a)$ the resulting table is an
abelian semigroup and $b)$ the resonant decomposition is given by
\begin{equation}
S_{0}=\{\lambda_{2},\lambda_{4}\},\ S_{1}=\{\lambda_{1},\lambda_{3}%
,\lambda_{4}\}\text{.}\label{ch1}%
\end{equation}
Those ways are:%
\begin{equation}%
\begin{tabular}
[c]{l|llll}%
$S_{II}^{1}$ & $\lambda_{1}$ & $\lambda_{2}$ & $\lambda_{3}$ & $\lambda_{4}%
$\\\hline
$\lambda_{1}$ & $\lambda_{4}$ & $\lambda_{3}$ & $\lambda_{4}$ & $\lambda_{4}%
$\\
$\lambda_{2}$ & $\lambda_{3}$ & $\lambda_{4}$ & $\lambda_{4}$ & $\lambda_{4}%
$\\
$\lambda_{3}$ & $\lambda_{4}$ & $\lambda_{4}$ & $\lambda_{4}$ & $\lambda_{4}%
$\\
$\lambda_{4}$ & $\lambda_{4}$ & $\lambda_{4}$ & $\lambda_{4}$ & $\lambda_{4}$%
\end{tabular}
\ \ \ \ \ \ ;\text{\ \ }\
\begin{tabular}
[c]{l|llll}%
$S_{II}^{2}$ & $\lambda_{1}$ & $\lambda_{2}$ & $\lambda_{3}$ & $\lambda_{4}%
$\\\hline
$\lambda_{1}$ & $\lambda_{2}$ & $\lambda_{3}$ & $\lambda_{4}$ & $\lambda_{4}%
$\\
$\lambda_{2}$ & $\lambda_{3}$ & $\lambda_{4}$ & $\lambda_{4}$ & $\lambda_{4}%
$\\
$\lambda_{3}$ & $\lambda_{4}$ & $\lambda_{4}$ & $\lambda_{4}$ & $\lambda_{4}%
$\\
$\lambda_{4}$ & $\lambda_{4}$ & $\lambda_{4}$ & $\lambda_{4}$ & $\lambda_{4}$%
\end{tabular}
\ \ \ \ \text{.}\label{ch_1__1}%
\end{equation}
Each of them is isomorphic to one of the semigroups of the list given by the
program \textit{com.f} of \cite{Plemmons}\ for $n=4$. We give this information
in the following table,%

\begin{equation}%
\begin{tabular}
[c]{lll|c}
& isomorphic to &  & isomorphism\\\hline
$S_{II}^{1}$ & \multicolumn{1}{c}{$\Longleftrightarrow$} & $S_{\left(
4\right)  }^{10}$ & $(\ \lambda_{4}\ \lambda_{3}\ \lambda_{1}\ \lambda_{2}%
\ )$\\
$S_{II}^{2}$ & \multicolumn{1}{c}{$\Longleftrightarrow$} & $S_{\left(
4\right)  }^{12}$ & $(\ \lambda_{4}\ \lambda_{3}\ \lambda_{2}\ \lambda_{1}\ )$%
\end{tabular}
\label{ch_1_2}%
\end{equation}
where the isomorphism denoted by $(\ \lambda_{a}\ \lambda_{b}\ \lambda
_{c}\ \lambda_{d}\ )$ means: change $\lambda_{1}$ by $\lambda_{a}$,
$\lambda_{2}$ by $\lambda_{b}$, $\lambda_{3}$ by $\lambda_{c}$ and
$\lambda_{4}$ by $\lambda_{d}$. The semigroups $S_{\left(  4\right)  }^{10}$
and $S_{\left(  4\right)  }^{12}$ of the list given by the program
\textit{com.f} for $n=4$ are:%
\begin{equation}%
\begin{tabular}
[c]{l|llll}%
$S_{\left(  4\right)  }^{10}$ & $\lambda_{1}$ & $\lambda_{2}$ & $\lambda_{3}$
& $\lambda_{4}$\\\hline
$\lambda_{1}$ & $\lambda_{1}$ & $\lambda_{1}$ & $\lambda_{1}$ & $\lambda_{1}%
$\\
$\lambda_{2}$ & $\lambda_{1}$ & $\lambda_{1}$ & $\lambda_{1}$ & $\lambda_{1}%
$\\
$\lambda_{3}$ & $\lambda_{1}$ & $\lambda_{1}$ & $\lambda_{1}$ & $\lambda_{2}%
$\\
$\lambda_{4}$ & $\lambda_{1}$ & $\lambda_{1}$ & $\lambda_{2}$ & $\lambda_{1}$%
\end{tabular}
\ \ \ \text{\ ; \ }%
\begin{tabular}
[c]{l|llll}%
$S_{\left(  4\right)  }^{12}$ & $\lambda_{1}$ & $\lambda_{2}$ & $\lambda_{3}$
& $\lambda_{4}$\\\hline
$\lambda_{1}$ & $\lambda_{1}$ & $\lambda_{1}$ & $\lambda_{1}$ & $\lambda_{1}%
$\\
$\lambda_{2}$ & $\lambda_{1}$ & $\lambda_{1}$ & $\lambda_{1}$ & $\lambda_{1}%
$\\
$\lambda_{3}$ & $\lambda_{1}$ & $\lambda_{1}$ & $\lambda_{1}$ & $\lambda_{2}%
$\\
$\lambda_{4}$ & $\lambda_{1}$ & $\lambda_{1}$ & $\lambda_{2}$ & $\lambda_{3}$%
\end{tabular}
\ \ \ \text{.}\label{ch_1_3}%
\end{equation}

It can be checked directly that applying the isomorphism $(\ \lambda
_{4}\ \lambda_{3}\ \lambda_{1}\ \lambda_{2}\ )$ to $S^{10}$ obtains
$S_{II}^{1}$ and applying the isomorphism $(\ \lambda_{4}\ \lambda
_{3}\ \lambda_{2}\ \lambda_{1}\ )$ to $S^{12}$ obtains $S_{II}^{2}$.

\subsection{\textbf{Type III}}

The template is:
\[%
\begin{tabular}
[c]{l|llll}%
$S^{?}$ & $\lambda_{1}$ & $\lambda_{2}$ & $\lambda_{3}$ & $\lambda_{4}%
$\\\hline
$\lambda_{1}$ &  & $\lambda_{4}$ & $\lambda_{1}$ & $\lambda_{4}$\\
$\lambda_{2}$ & $\lambda_{4}$ &  &  & $\lambda_{4}$\\
$\lambda_{3}$ & $\lambda_{1}$ &  &  & $\lambda_{4}$\\
$\lambda_{4}$ & $\lambda_{4}$ & $\lambda_{4}$ & $\lambda_{4}$ & $\lambda_{4}$%
\end{tabular}
\ \
\]

We have found that there are $7$ non-isomorphic ways of filling this template
such that: $a)$ the resulting table is an abelian semigroup and $b)$ the
resonant decomposition is given by %

\begin{equation}
S_{0}=\{\lambda_{2},\lambda_{3},\lambda_{4}\},S_{1}=\{\lambda_{1},\lambda
_{4}\}\text{.}\label{ch2}%
\end{equation}
Those ways are:%

\[%
\begin{tabular}
[c]{l|llll}%
$S_{III}^{1}$ & $\lambda_{1}$ & $\lambda_{2}$ & $\lambda_{3}$ & $\lambda_{4}%
$\\\hline
$\lambda_{1}$ & $\lambda_{4}$ & $\lambda_{4}$ & $\lambda_{1}$ & $\lambda_{4}%
$\\
$\lambda_{2}$ & $\lambda_{4}$ & $\lambda_{4}$ & $\lambda_{4}$ & $\lambda_{4}%
$\\
$\lambda_{3}$ & $\lambda_{1}$ & $\lambda_{4}$ & $\lambda_{3}$ & $\lambda_{4}%
$\\
$\lambda_{4}$ & $\lambda_{4}$ & $\lambda_{4}$ & $\lambda_{4}$ & $\lambda_{4}$%
\end{tabular}
\ \ \ \text{\ ;\ }\
\begin{tabular}
[c]{l|llll}%
$S_{III}^{2}$ & $\lambda_{1}$ & $\lambda_{2}$ & $\lambda_{3}$ & $\lambda_{4}%
$\\\hline
$\lambda_{1}$ & $\lambda_{3}$ & $\lambda_{4}$ & $\lambda_{1}$ & $\lambda_{4}%
$\\
$\lambda_{2}$ & $\lambda_{4}$ & $\lambda_{4}$ & $\lambda_{4}$ & $\lambda_{4}%
$\\
$\lambda_{3}$ & $\lambda_{1}$ & $\lambda_{4}$ & $\lambda_{3}$ & $\lambda_{4}%
$\\
$\lambda_{4}$ & $\lambda_{4}$ & $\lambda_{4}$ & $\lambda_{4}$ & $\lambda_{4}$%
\end{tabular}
\ \ \ \text{\ ;\ }\
\begin{tabular}
[c]{l|llll}%
$S_{III}^{3}$ & $\lambda_{1}$ & $\lambda_{2}$ & $\lambda_{3}$ & $\lambda_{4}%
$\\\hline
$\lambda_{1}$ & $\lambda_{4}$ & $\lambda_{4}$ & $\lambda_{1}$ & $\lambda_{4}%
$\\
$\lambda_{2}$ & $\lambda_{4}$ & $\lambda_{4}$ & $\lambda_{2}$ & $\lambda_{4}%
$\\
$\lambda_{3}$ & $\lambda_{1}$ & $\lambda_{2}$ & $\lambda_{3}$ & $\lambda_{4}%
$\\
$\lambda_{4}$ & $\lambda_{4}$ & $\lambda_{4}$ & $\lambda_{4}$ & $\lambda_{4}$%
\end{tabular}
\
\]

\[%
\begin{tabular}
[c]{l|llll}%
$S_{III}^{4}$ & $\lambda_{1}$ & $\lambda_{2}$ & $\lambda_{3}$ & $\lambda_{4}%
$\\\hline
$\lambda_{1}$ & $\lambda_{2}$ & $\lambda_{4}$ & $\lambda_{1}$ & $\lambda_{4}%
$\\
$\lambda_{2}$ & $\lambda_{4}$ & $\lambda_{4}$ & $\lambda_{2}$ & $\lambda_{4}%
$\\
$\lambda_{3}$ & $\lambda_{1}$ & $\lambda_{2}$ & $\lambda_{3}$ & $\lambda_{4}%
$\\
$\lambda_{4}$ & $\lambda_{4}$ & $\lambda_{4}$ & $\lambda_{4}$ & $\lambda_{4}$%
\end{tabular}
\ \ \text{\ ;\ }\
\begin{tabular}
[c]{l|llll}%
$S_{III}^{5}$ & $\lambda_{1}$ & $\lambda_{2}$ & $\lambda_{3}$ & $\lambda_{4}%
$\\\hline
$\lambda_{1}$ & $\lambda_{4}$ & $\lambda_{4}$ & $\lambda_{1}$ & $\lambda_{4}%
$\\
$\lambda_{2}$ & $\lambda_{4}$ & $\lambda_{2}$ & $\lambda_{4}$ & $\lambda_{4}%
$\\
$\lambda_{3}$ & $\lambda_{1}$ & $\lambda_{4}$ & $\lambda_{3}$ & $\lambda_{4}%
$\\
$\lambda_{4}$ & $\lambda_{4}$ & $\lambda_{4}$ & $\lambda_{4}$ & $\lambda_{4}$%
\end{tabular}
\ \ \text{\ ;\ }\
\begin{tabular}
[c]{l|llll}%
$S_{III}^{6}$ & $\lambda_{1}$ & $\lambda_{2}$ & $\lambda_{3}$ & $\lambda_{4}%
$\\\hline
$\lambda_{1}$ & $\lambda_{4}$ & $\lambda_{4}$ & $\lambda_{1}$ & $\lambda_{4}%
$\\
$\lambda_{2}$ & $\lambda_{4}$ & $\lambda_{2}$ & $\lambda_{2}$ & $\lambda_{4}%
$\\
$\lambda_{3}$ & $\lambda_{1}$ & $\lambda_{2}$ & $\lambda_{3}$ & $\lambda_{4}%
$\\
$\lambda_{4}$ & $\lambda_{4}$ & $\lambda_{4}$ & $\lambda_{4}$ & $\lambda_{4}$%
\end{tabular}
\]

\[%
\begin{tabular}
[c]{l|llll}%
$S_{III}^{7}$ & $\lambda_{1}$ & $\lambda_{2}$ & $\lambda_{3}$ & $\lambda_{4}%
$\\\hline
$\lambda_{1}$ & $\lambda_{4}$ & $\lambda_{4}$ & $\lambda_{1}$ & $\lambda_{4}%
$\\
$\lambda_{2}$ & $\lambda_{4}$ & $\lambda_{2}$ & $\lambda_{4}$ & $\lambda_{4}%
$\\
$\lambda_{3}$ & $\lambda_{1}$ & $\lambda_{4}$ & $\lambda_{3}$ & $\lambda_{4}%
$\\
$\lambda_{4}$ & $\lambda_{4}$ & $\lambda_{4}$ & $\lambda_{4}$ & $\lambda_{4}$%
\end{tabular}
\
\]

As before, each of these forms are isomorphic to one of the semigroups of the
list given by the program \textit{com.f} of \cite{Plemmons}\ for $n=4$. Those
semigroups and the corresponding isomorphisms are given in the following table:%

\begin{equation}%
\begin{tabular}
[c]{lll|c}
& isomorphic to &  & isomorphism\\\hline
\multicolumn{1}{c}{$S_{III}^{1}$} & \multicolumn{1}{c}{$\Longleftrightarrow$}
& \multicolumn{1}{c|}{$S_{\left(  4\right)  }^{13}$} & $(\ \lambda
_{4}\ \lambda_{2}\ \lambda_{1}\ \lambda_{3}\ )$\\
\multicolumn{1}{c}{$S_{III}^{2}$} & \multicolumn{1}{c}{$\Longleftrightarrow$}
& \multicolumn{1}{c|}{$S_{\left(  4\right)  }^{28}$} & $(\ \lambda
_{4}\ \lambda_{2}\ \lambda_{3}\ \lambda_{1}\ )$\\
\multicolumn{1}{c}{$S_{III}^{3}$} & \multicolumn{1}{c}{$\Longleftrightarrow$}
& \multicolumn{1}{c|}{$S_{\left(  4\right)  }^{42}$} & $(\ \lambda
_{4}\ \lambda_{1}\ \lambda_{2}\ \lambda_{3}\ )$\\
\multicolumn{1}{c}{$S_{III}^{4}$} & \multicolumn{1}{c}{$\Longleftrightarrow$}
& \multicolumn{1}{c|}{$S_{\left(  4\right)  }^{43}$} & $(\ \lambda
_{4}\ \lambda_{2}\ \lambda_{1}\ \lambda_{3}\ )$\\
\multicolumn{1}{c}{$S_{III}^{5}$} & \multicolumn{1}{c}{$\Longleftrightarrow$}
& \multicolumn{1}{c|}{$S_{\left(  4\right)  }^{44}$} & $(\ \lambda
_{4}\ \lambda_{1}\ \lambda_{2}\ \lambda_{3}\ )$\\
\multicolumn{1}{c}{$S_{III}^{6}$} & \multicolumn{1}{c}{$\Longleftrightarrow$}
& \multicolumn{1}{c|}{$S_{\left(  4\right)  }^{45}$} & $(\ \lambda
_{4}\ \lambda_{1}\ \lambda_{2}\ \lambda_{3}\ )$\\
\multicolumn{1}{c}{$S_{III}^{7}$} & \multicolumn{1}{c}{$\Longleftrightarrow$}
& \multicolumn{1}{c|}{$S_{\left(  4\right)  }^{64}$} & $(\ \lambda
_{4}\ \lambda_{2}\ \lambda_{3}\ \lambda_{1}\ )$%
\end{tabular}
\ \ \ \ \ \ \label{iso2}%
\end{equation}
where the semigroups $S_{\left(  4\right)  }^{13}$, $S_{\left(  4\right)
}^{28}$, $S_{\left(  4\right)  }^{42}$, $S_{\left(  4\right)  }^{43}$,
$S_{\left(  4\right)  }^{44}$, $S_{\left(  4\right)  }^{45}$ y $S_{\left(
4\right)  }^{64}$, of the list generated by the program \textit{com.f} for
$n=4$, are explicitly given in the Appendix.

\subsection{\textbf{Type V}}

The template is:
\[%
\begin{tabular}
[c]{l|llll}%
$S^{?}$ & $\lambda_{1}$ & $\lambda_{2}$ & $\lambda_{3}$ & $\lambda_{4}%
$\\\hline
$\lambda_{1}$ &  & $\lambda_{1}$ &  & $\lambda_{4}$\\
$\lambda_{2}$ & $\lambda_{1}$ &  & $\lambda_{3}$ & $\lambda_{4}$\\
$\lambda_{3}$ &  & $\lambda_{3}$ &  & $\lambda_{4}$\\
$\lambda_{4}$ & $\lambda_{4}$ & $\lambda_{4}$ & $\lambda_{4}$ & $\lambda_{4}$%
\end{tabular}
\ \ \ \
\]
In this case we have found that there is just one way of filling this template
such that: $a)$ the resulting table is an abelian semigroup and $b)$ the
resonant decomposition is given by%
\begin{equation}
S_{0}=\{\lambda_{2},\lambda_{4}\},S_{1}=\{\lambda_{1},\lambda_{3},\lambda
_{4}\}\text{.}\label{ch3}%
\end{equation}
This is:%
\[%
\begin{tabular}
[c]{l|llll}%
$S_{V}$ & $\lambda_{1}$ & $\lambda_{2}$ & $\lambda_{3}$ & $\lambda_{4}%
$\\\hline
$\lambda_{1}$ & $\lambda_{4}$ & $\lambda_{1}$ & $\lambda_{4}$ & $\lambda_{4}%
$\\
$\lambda_{2}$ & $\lambda_{1}$ & $\lambda_{2}$ & $\lambda_{3}$ & $\lambda_{4}%
$\\
$\lambda_{3}$ & $\lambda_{4}$ & $\lambda_{3}$ & $\lambda_{4}$ & $\lambda_{4}%
$\\
$\lambda_{4}$ & $\lambda_{4}$ & $\lambda_{4}$ & $\lambda_{4}$ & $\lambda_{4}$%
\end{tabular}
\]
This table is isomorphic to the semigroup $S_{\left(  4\right)  }^{42}$ given
in Appendix. The isomorphism is given by
\begin{equation}
(\lambda_{4}\ \lambda_{1}\ \lambda_{3}\ \lambda_{2}\ ).\label{iso3}%
\end{equation}

Note that the semigroup $S_{\left(  4\right)  }^{42}$ also permits us to
obtain type III algebra. So we can ask, \textit{how can the same semigroup
lead at the same time to type III and V algebras?} The reason is that this
semigroup has two different resonant decompositions, (\ref{ch1}) and
(\ref{ch2}). Each of them permits us to extract different kinds of resonant
subalgebras leading, after the reduction, to completely different algebras.

\subsection{\textbf{Isomorphisms and consistency of the procedure}}

In the following table we summarize our results by specifying \textit{all} the
non-isomorphic semigroups that permit us to generate the type I, II, III and
the V algebra starting from the $2$-dimensional algebra (\ref{sum2}):%

\begin{equation}%
\begin{tabular}
[c]{|l|l|}\hline
\textbf{Algebra} & \textbf{Semigroup used}\\\hline
Type I & many semigroups (see section \ref{typeI})\\\hline
Type II & $S_{\left(  4\right)  }^{10}$, $S_{\left(  4\right)  }^{12}$\\\hline
Type III & $S_{\left(  4\right)  }^{13}$, $S_{\left(  4\right)  }^{28}$,
$S_{\left(  4\right)  }^{42}$, $S_{\left(  4\right)  }^{43}$, $S_{\left(
4\right)  }^{44}$, $S_{\left(  4\right)  }^{45}$ and $S_{\left(  4\right)
}^{64}$\\\hline
Type V & $S_{\left(  4\right)  }^{42}$\\\hline
\end{tabular}
\label{Final}%
\end{equation}

For consistency we should prove that each semigroup of the table (\ref{Tab2})
(which we have constructed by hand in section \ref{Bianchi_related}) is
isomorphic to one of the semigroups of table (\ref{Final}) that we have found
by using computer programs. This information is given in the following table:%

\[%
\begin{tabular}
[c]{lll|c}
& isomorphic to &  & isomorphism\\\hline
$S_{N1}$ & \multicolumn{1}{c}{$\Longleftrightarrow$} & $S_{\left(  4\right)
}^{44}$ & $(\ \lambda_{4}\ \lambda_{1}\ \lambda_{2}\ \lambda_{3}\ )$\\
$S_{N2}$ & \multicolumn{1}{c}{$\Longleftrightarrow$} & $S_{\left(  4\right)
}^{12}$ & $(\ \lambda_{4}\ \lambda_{3}\ \lambda_{2}\ \lambda_{1}\ )$\\
$S_{N3}$ & \multicolumn{1}{c}{$\Longleftrightarrow$} & $S_{\left(  4\right)
}^{42}$ & $(\ \lambda_{4}\ \lambda_{1}\ \lambda_{3}\ \lambda_{2}\ )$\\
$S_{E}^{\left(  2\right)  }$ & \multicolumn{1}{c}{$\Longleftrightarrow$} &
$S_{\left(  4\right)  }^{43}$ & $(\ \lambda_{4}\ \lambda_{3}\ \lambda
_{2}\ \lambda_{1}\ )$\\
$S_{K}^{\left(  3\right)  }$ & \multicolumn{1}{c}{$\Longleftrightarrow$} &
$S_{\left(  4\right)  }^{45}$ & $(\ \lambda_{4}\ \lambda_{2}\ \lambda
_{1}\ \lambda_{3}\ )$%
\end{tabular}
\]

\section{\textbf{Comments}}

\label{Com}

So, in this work  we present  a complete study about the
possibility of relate, by means of an expansion, the isometry algebras that
 act transitively in $2$ and $3$ dimensions. It was found that
some isometries in $3$ dimensions, specifically those of type I, II, III and V
(according  Bianchi's classification), can be obtained as
expansions of the isometries in $2$ dimensions. In general, there is more than
one possibility to obtain these results, i.e., it can happen that different
semigroups will lead to the same expanded algebra. Also, it is  shown that the
other Bianchi the type IV, VI-IX algebras cannot be obtained as an
expansion from the isometry algebras in $2$ dimensions. This means that the
first isometry algebras have properties that can be obtained from isometries
in $2$ dimensions but the second set have properties that are in some sense
intrinsic in $3$ dimensions.

The results obtained in this work are interesting, because even when $2$ and
$3$-dimensional isometry algebras are well known in the literature, the
\textit{non-trivial relations} we have found are something completely new. To
perform extensions of Lie algebras by adding generators and where the original
algebra is a subalgebra of the resulting ones it could be thought as a simple
problem. This is for example what happens with the algebras type III and V
(see table (\ref{Tab2}) and (\ref{Bianchi})) where they are simply two ways to
extend the algebra (\ref{2dimalg_dos}) to a 3 dimensional one (it is direct to
see that the algebra (\ref{2dimalg_dos}) is contained as a subalgebra in the
algebras type III and V of table (\ref{Bianchi})). In the present work we have
obtained these results by means of the expansion procedure\footnote{This in
some sense tell us that the $S$-expansion method includes not only the
contraction methods: contains also some (and probably all) kind of the
extensions procedures of a Lie algebra. A formal study about this theoretical
result is a work in progress (See ref. \cite{Poli}).} but, even
further, new kinds of relations were found. This is the case of the algebras
"type II" (see table (\ref{Tab2})) that was obtained as an expansion of the
algebra (\ref{2dimalg_dos}) and where this original algebra  is not present
as a subalgebra. This is why we  refer to these relations between
$2$ and $3$-dimensional as \textit{non-trivial} relations. These results are
completely new and can just be reached by means of the $S$-expansion
method\footnote{Note that the expansion method by using a parameter is
equivalent to an $S$-expansion but using just one specific semigroup, the
semigroup $S_{E}^{\left(  n\right)  }$ introduced in ref. \cite{irs}. Then
using another semigroups will lead us to more general expansions and that's
why effectively this results can just be obtained via the $S$-expansion
procedure.}.

It is also interesting that to face this problem we had to look for other
semigroups that have not been used yet in the applications of the
$S$-expansion method (the semigroups $S_{N1}$, $S_{N2}$, $S_{N3}$ and
$S_{K}^{\left(  3\right)  }$). Their principal properties, as the problem of
finding a resonant decomposition, were studied for each of them. By using
computing programs we have checked our results and solved the problem in a
 complete way.

In general, to understand whether two sets of algebras can be related by
means of an expansion is very interesting problem from both, physical and mathematical
point of view. In fact, many physical applications have been found in this
context: for example, in \cite{aipv1} the $M$-algebra is obtained as an
expansion of the $\mathfrak{osp}$(32/1) algebra. In fact, in \cite{irs1} this
result was re-obtained but via the $S$-expansion method which gives in
addition the invariant tensors of the expanded algebra. In this way, in the
mentioned reference, an eleven-dimensional gauge theory for the $M$-algebra
was constructed. Another interesting application is \cite{irs2} where
(2+1)-dimensional Chern-Simons AdS gravity is obtained from the so-called
"`exotic gravity"' and \cite{K15} where Standard General Relativity is
obtained from Chern-Simons Gravity. Finally, a generalization of the results
presented here can be useful to study isometries in higher dimensions,
particularly, in applications related with isometries of black holes solutions. A
first step on this direction is done in \cite{bianmetric}.

We conclude by remarking that what remains in common in all the physical
applications mentioned above is the question: \textit{given two symmetry
algebras, can they be related by means of some contraction or expansion
procedure?} The method presented in this paper is very instructive to answer on
this question. If the answer is \textit{yes}, there is a way to construct the
semigroup that gives the relation (as made in section \ref{Bianchi_related}).
On the contrary, if the answer is $no$ then it should be shown explicitly that
there exist no expansion method that can reproduce this relation (as made in
section \ref{Bianchi_no_related}). This mechanism can be developed to a
general algorithm to study more complicated cases, where the construction of
the semigroups by hand (as we made it here) would be impossible. General
criteria and the mentioned algorithm is a work in progress (see \cite{Poli}).

\section{\textbf{Acknoledgements}}

We are grateful to Patricio Salgado for many valuable discussions and for the
establishing the task. I.K. was supported by Fondecyt (Chile) grants 1050512,
1121030 and by DIUBB grant (UBB, Chile) 102609.  N. M. \& F.N. thank R.
D'Auria, M. Trigiante and L. Andrianopoli for their kind hospitality at
Dipartamento di Fisica of Politecnico di Torino, where part of this work was
done. N.M. is also grateful to the Comisi\'{o}n Nacional de Investigaci\'{o}n
Cient\'{\i}fica y Tecnol\'{o}gica Conicyt (Chile) for financial support
through a Becas-Chile grant. F.N. thanks CSIC for a JAE-predoc grant, 
cofunded by the European Social Fund.

\section{\textbf{Appendix}}

In this appendix we give explicitly the multiplication tables of the semigroups
that we have used in this paper and that belongs to the list generated by the
program \textit{com.f} of \cite{Plemmons}\ for $n=4$. Those semigroups are:%

\[%
\begin{tabular}
[c]{l|llll}%
$S_{\left(  4\right)  }^{10}$ & $\lambda_{1}$ & $\lambda_{2}$ & $\lambda_{3}$
& $\lambda_{4}$\\\hline
$\lambda_{1}$ & $\lambda_{1}$ & $\lambda_{1}$ & $\lambda_{1}$ & $\lambda_{1}%
$\\
$\lambda_{2}$ & $\lambda_{1}$ & $\lambda_{1}$ & $\lambda_{1}$ & $\lambda_{1}%
$\\
$\lambda_{3}$ & $\lambda_{1}$ & $\lambda_{1}$ & $\lambda_{1}$ & $\lambda_{2}%
$\\
$\lambda_{4}$ & $\lambda_{1}$ & $\lambda_{1}$ & $\lambda_{2}$ & $\lambda_{1}$%
\end{tabular}
\ \ \ \text{; \ }%
\begin{tabular}
[c]{l|llll}%
$S_{\left(  4\right)  }^{12}$ & $\lambda_{1}$ & $\lambda_{2}$ & $\lambda_{3}$
& $\lambda_{4}$\\\hline
$\lambda_{1}$ & $\lambda_{1}$ & $\lambda_{1}$ & $\lambda_{1}$ & $\lambda_{1}%
$\\
$\lambda_{2}$ & $\lambda_{1}$ & $\lambda_{1}$ & $\lambda_{1}$ & $\lambda_{1}%
$\\
$\lambda_{3}$ & $\lambda_{1}$ & $\lambda_{1}$ & $\lambda_{1}$ & $\lambda_{2}%
$\\
$\lambda_{4}$ & $\lambda_{1}$ & $\lambda_{1}$ & $\lambda_{2}$ & $\lambda_{3}$%
\end{tabular}
\ \ \ \text{;\ \ \ }%
\begin{tabular}
[c]{l|llll}%
$S_{\left(  4\right)  }^{13}$ & $\lambda_{1}$ & $\lambda_{2}$ & $\lambda_{3}$
& $\lambda_{4}$\\\hline
$\lambda_{1}$ & $\lambda_{1}$ & $\lambda_{1}$ & $\lambda_{1}$ & $\lambda_{1}%
$\\
$\lambda_{2}$ & $\lambda_{1}$ & $\lambda_{1}$ & $\lambda_{1}$ & $\lambda_{1}%
$\\
$\lambda_{3}$ & $\lambda_{1}$ & $\lambda_{1}$ & $\lambda_{1}$ & $\lambda_{3}%
$\\
$\lambda_{4}$ & $\lambda_{1}$ & $\lambda_{1}$ & $\lambda_{3}$ & $\lambda_{4}$%
\end{tabular}
\]

\[%
\begin{tabular}
[c]{l|llll}%
$S_{\left(  4\right)  }^{28}$ & $\lambda_{1}$ & $\lambda_{2}$ & $\lambda_{3}$
& $\lambda_{4}$\\\hline
$\lambda_{1}$ & $\lambda_{1}$ & $\lambda_{1}$ & $\lambda_{1}$ & $\lambda_{1}%
$\\
$\lambda_{2}$ & $\lambda_{1}$ & $\lambda_{1}$ & $\lambda_{1}$ & $\lambda_{1}%
$\\
$\lambda_{3}$ & $\lambda_{1}$ & $\lambda_{1}$ & $\lambda_{3}$ & $\lambda_{4}%
$\\
$\lambda_{4}$ & $\lambda_{1}$ & $\lambda_{1}$ & $\lambda_{4}$ & $\lambda_{3}$%
\end{tabular}
\ \ \ \text{; \ \ }%
\begin{tabular}
[c]{l|llll}%
$S_{\left(  4\right)  }^{42}$ & $\lambda_{1}$ & $\lambda_{2}$ & $\lambda_{3}$
& $\lambda_{4}$\\\hline
$\lambda_{1}$ & $\lambda_{1}$ & $\lambda_{1}$ & $\lambda_{1}$ & $\lambda_{1}%
$\\
$\lambda_{2}$ & $\lambda_{1}$ & $\lambda_{1}$ & $\lambda_{1}$ & $\lambda_{2}%
$\\
$\lambda_{3}$ & $\lambda_{1}$ & $\lambda_{1}$ & $\lambda_{1}$ & $\lambda_{3}%
$\\
$\lambda_{4}$ & $\lambda_{1}$ & $\lambda_{2}$ & $\lambda_{3}$ & $\lambda_{4}$%
\end{tabular}
\ \ \ \text{; \ \ }%
\begin{tabular}
[c]{l|llll}%
$S_{\left(  4\right)  }^{43}$ & $\lambda_{1}$ & $\lambda_{2}$ & $\lambda_{3}$
& $\lambda_{4}$\\\hline
$\lambda_{1}$ & $\lambda_{1}$ & $\lambda_{1}$ & $\lambda_{1}$ & $\lambda_{1}%
$\\
$\lambda_{2}$ & $\lambda_{1}$ & $\lambda_{1}$ & $\lambda_{1}$ & $\lambda_{2}%
$\\
$\lambda_{3}$ & $\lambda_{1}$ & $\lambda_{1}$ & $\lambda_{2}$ & $\lambda_{3}%
$\\
$\lambda_{4}$ & $\lambda_{1}$ & $\lambda_{2}$ & $\lambda_{3}$ & $\lambda_{4}$%
\end{tabular}
\]

\[%
\begin{tabular}
[c]{l|llll}%
$S_{\left(  4\right)  }^{44}$ & $\lambda_{1}$ & $\lambda_{2}$ & $\lambda_{3}$
& $\lambda_{4}$\\\hline
$\lambda_{1}$ & $\lambda_{1}$ & $\lambda_{1}$ & $\lambda_{1}$ & $\lambda_{1}%
$\\
$\lambda_{2}$ & $\lambda_{1}$ & $\lambda_{1}$ & $\lambda_{1}$ & $\lambda_{2}%
$\\
$\lambda_{3}$ & $\lambda_{1}$ & $\lambda_{1}$ & $\lambda_{3}$ & $\lambda_{1}%
$\\
$\lambda_{4}$ & $\lambda_{1}$ & $\lambda_{2}$ & $\lambda_{1}$ & $\lambda_{4}$%
\end{tabular}
\ \ \text{\ ; \ \ }%
\begin{tabular}
[c]{l|llll}%
$S_{\left(  4\right)  }^{45}$ & $\lambda_{1}$ & $\lambda_{2}$ & $\lambda_{3}$
& $\lambda_{4}$\\\hline
$\lambda_{1}$ & $\lambda_{1}$ & $\lambda_{1}$ & $\lambda_{1}$ & $\lambda_{1}%
$\\
$\lambda_{2}$ & $\lambda_{1}$ & $\lambda_{1}$ & $\lambda_{1}$ & $\lambda_{2}%
$\\
$\lambda_{3}$ & $\lambda_{1}$ & $\lambda_{1}$ & $\lambda_{3}$ & $\lambda_{3}%
$\\
$\lambda_{4}$ & $\lambda_{1}$ & $\lambda_{2}$ & $\lambda_{3}$ & $\lambda_{4}$%
\end{tabular}
\ \text{ \ ; \ \ }%
\begin{tabular}
[c]{l|llll}%
$S_{\left(  4\right)  }^{64}$ & $\lambda_{1}$ & $\lambda_{2}$ & $\lambda_{3}$
& $\lambda_{4}$\\\hline
$\lambda_{1}$ & $\lambda_{1}$ & $\lambda_{1}$ & $\lambda_{1}$ & $\lambda_{1}%
$\\
$\lambda_{2}$ & $\lambda_{1}$ & $\lambda_{2}$ & $\lambda_{1}$ & $\lambda_{1}%
$\\
$\lambda_{3}$ & $\lambda_{1}$ & $\lambda_{1}$ & $\lambda_{3}$ & $\lambda_{4}%
$\\
$\lambda_{4}$ & $\lambda_{1}$ & $\lambda_{1}$ & $\lambda_{4}$ & $\lambda_{3}$%
\end{tabular}
\]


\begin{thebibliography}{99}                                                                                               %

\bibitem {hs} M.~Hatsuda and M.~Sakaguchi, 
``Wess-Zumino term for the AdS superstring and generalized Inonu-Wigner
 contraction,'' Prog.\ Theor.\ Phys.\  {\bf 109} (2003) 853 [arXiv:hep-th/0106114].

\bibitem {aipv1} J.~A.~de Azcarraga, J.~M.~Izquierdo, M.~Picon and O.~Varela, 
``Generating Lie and gauge free differential (super)algebras by expanding
Maurer-Cartan forms and Chern-Simons supergravity,'' Nucl.\ Phys.\  B {\bf 662} (2003) 185 [arXiv:hep-th/0212347].


\bibitem {aipv2} J.~A.~de Azcarraga, J.~M.~Izquierdo, M.~Picon and O.~Varela,
  ``Extensions, expansions, Lie algebra cohomology and enlarged superspaces,''
   Class.\ Quant.\ Grav.\  {\bf 21} (2004) S1375
  [arXiv:hep-th/0401033].

\bibitem {aipv3} J.~A.~de Azcarraga, J.~M.~Izquierdo, M.~Picon and O.~Varela,
  ``Expansions of algebras and superalgebras and some applications,''
  Int.\ J.\ Theor.\ Phys.\  {\bf 46} (2007) 2738
  [arXiv:hep-th/0703017].

\bibitem {irs}  F.~Izaurieta, E.~Rodriguez and P.~Salgado,
  ``Expanding Lie (super)algebras through Abelian semigroups,''
  J.\ Math.\ Phys.\  {\bf 47} (2006) 123512
  [arXiv:hep-th/0606215].


\bibitem {irs1} F.~Izaurieta, E.~Rodriguez and P.~Salgado,
 ``Eleven-dimensional gauge theory for the M algebra as an Abelian semigroup expansion of osp(32/1),''
  Eur.\ Phys.\ J.\  C {\bf 54} (2008) 675
  [arXiv:hep-th/0606225].


\bibitem {irs2}  F.~Izaurieta, E.~Rodriguez, A.~Perez and P.~Salgado,
  ``Dual Formulation of the Lie Algebra S-expansion Procedure,''
  J.\ Math.\ Phys.\  {\bf 50} (2009) 073511
  [arXiv:0903.4712 [hep-th]].

\bibitem {K15}  F.~Izaurieta, E.~Rodriguez, P.~Minning, P.~Salgado and A.~Perez,
  ``Standard General Relativity from Chern-Simons Gravity,''
  Phys.\ Lett.\ B {\bf 678} (2009) 213
  [arXiv:0905.2187 [hep-th]].


\bibitem {K11} R.~Caroca, N.~Merino and P.~Salgado,
  ``S-Expansion of Higher-Order Lie Algebras,''
  J.\ Math.\ Phys.\  {\bf 50}, 013503 (2009)
  [arXiv:1004.5213 [math-ph]].

\bibitem {K12}  R.~Caroca, N.~Merino, A.~Perez and P.~Salgado,
  ``Generating Higher-Order Lie Algebras by Expanding Maurer Cartan Forms,''
  J.\ Math.\ Phys.\  {\bf 50}, 123527 (2009)
  [arXiv:1004.5503 [hep-th]].


\bibitem {K14} R.~Caroca, N.~Merino, P.~Salgado and O.~Valdivia,
``Generating infinite-dimensional algebras from loop algebras by expanding Maurer-Cartan forms,''
J.\ Math.\ Phys.\  {\bf 52} (2011) 043519.


\bibitem {bian} L. Bianchi, ``Sugli spazi a tre dimensioni che ammettono un
gruppo continuo di movimenti'', Memorie \ di Matematica e di Fisica della
Societa Italiana delle Scienze, Serie Terza, Tomo XI, pp. 267--352 (1898).


\bibitem{n4} G.E. Forsythe, ``SWAC computes 126 distinct semigroups of order 4'',
Proc. Amer. Math. Soc. 6, (1955) 443–447


\bibitem{n5} T.S. Motzkin and J.L. Selfridge, ``Semigroups of order five'', 
The November meeting in Los Angeles, Bull. Amer. Math. Soc. 62 (1) (1956) 13–23


\bibitem{n6-1} R. Plemmons, ``Construction and analysis of non-equivalent ﬁnite semigroups'' (1970)
Computational Problems in Abstract Algebra (Proc. Conf., Oxford, 1967) pp. 223–228 Pergamon,
Oxford

\bibitem{n6-2} R. Plemmons, ``A survey of computer applications to semigroups and related structures'', 
ACM SIGSAM 12, (1969) 28-39.


\bibitem{n6-3} R. Plemmons, ``There are 15973 semigroups of order 6,'' Math. Algorithms 2, (1967) 2–17


\bibitem{n7} H. J\"urgensen,  P. Wick, ``Die Halbgruppen der Ordnungen ≤ 7'',
Semigroup Forum 14 (1977), 1, 69–79.

\bibitem{n8} S. Satoh,  K. Yama, M.  Tokizawa, ``Semigroups of order 8,''  Semigroup Forum 49 (1994), 1, 7–29.

\bibitem{n9-1}  A. Distler, T. Kelsey, ``The monoids of orders eight, nine \& ten'',
Ann. Math. Artif. Intell. 56 (2009), 1, 3–21.


\bibitem{n9-2}  A. Distler and T. Kelsey, ``The monoids of order eight and nine,''
In S. Autexier, J. Campbell, J. Rubio, V. Sorge, M. Suzuki, and
F. Wiedijk (Eds), Artiﬁcial Intelligence and Symbolic Computation, 8th International Conference, 
AISC 2008, Birmingham, July, 2004, Proceedings, volume 5144 of ``Lecture Notes in Computer Science,'' 61–76, 
Springer, 2008.


\bibitem{n9-3} A. Distler and J.D. Mitchell, ``Smallsemi - A library of small semigroups'',
http://www-history.mcs.st-and.ac.uk/∼jamesm/smallsemi/, Feb 2010.

\bibitem{n9-4} A. Distler, T. Kelsey and J.D. Mitchell, http://www-circa.mcs.st-and.ac.uk/


\bibitem {Plemmons} J. Hildebrant, \textit{Handbook of finite semigroup
programs}. Preprint




\bibitem {bianmetric} R. Caroca, M. Cataldo, I. Kondrashuk, N. Merino and P.
Salgado, "Four-dimensional metric as S-expansion of
two-dimensional metric of negative curvature" Work in progress.

\bibitem {Poli} L. Andrianopoli, R. D'Auria, N. Merino, F. Nadal, M. Trigiante,
\textit{Formal aspects of expansion methods and Lie theory}, Work in progress.

\end{thebibliography}
\end{document}